\crefname{algorithm}{Algorithm}{Algorithms}
\theoremstyle{plain}
\newtheorem{theorem}{Theorem}[section]
\newtheorem{lemma}[theorem]{Lemma}
\newtheorem{observation}[theorem]{Observation}
\newtheorem{proposition}[theorem]{Proposition}
\theoremstyle{definition}
\newtheorem{definition}[theorem]{Definition}
\newcommand{\proofparagraph}[1]{\smallskip\emph{#1}}
\title{Turbocharging Heuristics for Weak Coloring Numbers\footnote{This paper is based on Alexander Dobler's master thesis \cite{dobler2021turbocharging}.
    Alexander Dobler was supported by the Vienna Science and Technology Fund (WWTF) under grant ICT19-035.
    Manuel Sorge was supported by the Alexander von Humboldt Foundation.
    Anaïs Villedieu was supported by the Austrian Science Fund (FWF) under grant~P31119. }}
\author{Alexander Dobler}
\author{Manuel Sorge}
\author{Anaïs Villedieu}
\affil{TU Wien, Austria, $\{\texttt{adobler}, \texttt{manuel.sorge}, \texttt{avilledieu}\}$\texttt{@ac.tuwien.ac.at}}
\newcommand{\poly}{\ensuremath{\operatorname{poly}}}
\newcommand{\wcol}{\ensuremath{\mathrm{wcol}}}
\newcommand{\wrc}{\ensuremath{\wcol_r}}       %
\newcommand{\wreach}{\ensuremath{\mathrm{Wreach}}}
\newcommand{\N}{\ensuremath{\mathbb{N}}}
\newcommand{\wreachable}{weakly $r$\nobreakdash-reachable}
\newcommand{\wrcol}{weak $r$\nobreakdash-coloring number}
\newcommand{\icwrcolleftlong}{\textup{\textsc{Incremental Conservative Weak $r$\nobreakdash-coloring}}}
\newcommand{\icwrcolleft}{\textup{\textsc{IC-WCOL($r$)}}}
\newcommand{\islong}{\textup{\textsc{Independent Set}}}
\newcommand{\wcolmerge}{\textup{\textsc{WCOL-Merge($r$)}}}
\newcommand{\placebefore}{\mathrm{placebefore}}
\newcommand{\placeafter}{\mathrm{placeafter}}
\newcommand{\dist}{\ensuremath{\mathrm{dist}}}
\newcommand{\degen}{\ensuremath{\mathrm{degeneracy}}}
\newcommand{\wrcolprob}{\textup{\textsc{WCOL($r$)}}}
\newcommand{\wrcolproblong}{\textup{\textsc{Weak $r$\nobreakdash-coloring Number}}}
\newcommand{\TCLASTC}{\icwrcolleft}
\newcommand{\TCMERGE}{\wcolmerge}
\newcommand{\TCLASTCRL}{\textup{\textsc{IC-WCOL-RL($r$)}}}
\newcommand{\prob}[6]{%
  \needspace{3\baselineskip}
  \begin{quote}
    \begin{labeling}{#6}%
    \item[#1]
    \item[\emph{#2}]#3
    \item[\emph{#4}]#5
    \end{labeling}%
  \end{quote}%
}
\newcommand{\probdef}[3]{\prob{#1}{Instance:}{#2}{Question:}{#3}{as}}
\begin{document}

\maketitle
\thispagestyle{empty}
\begin{abstract}
  \looseness=-1
  Bounded expansion and nowhere-dense classes of graphs capture the theoretical tractability for several important algorithmic problems.
  These classes of graphs can be characterized by the so-called weak coloring numbers of graphs, which generalize the well-known graph invariant degeneracy (also called $k$-core number).
  Being NP-hard, weak-coloring numbers were previously computed on real-world graphs mainly via incremental heuristics.
  We study whether it is feasible to augment such heuristics with exponential-time subprocedures that kick in when a desired upper bound on the weak coloring number is breached.
  We provide hardness and tractability results on the corresponding computational subproblems.
  We implemented several of the resulting algorithms and show them to be competitive with previous approaches on a previously studied set of benchmark instances containing 86 graphs with up to 183831 edges.
  We obtain improved weak coloring numbers for over half of the instances.
\end{abstract}

\pagenumbering{arabic}

\section{Introduction}\label{sec:introduction}
A \emph{degeneracy ordering} of a graph~$G$ can be obtained by iteratively removing an arbitrary vertex of minimum degree from~$G$ and putting it in front of the current ordering~\cite{DBLP:journals/jacm/MatulaB83}.
The \emph{degeneracy} of a graph is the largest degree of a vertex encountered at removal.
Degeneracy orderings are immensely useful when solving various tasks on graphs both in theory and practice~\cite{cai_random_2006,eppstein_listing_2013,komusiewicz_algorithmic_2015,pinar_escape_2017,himmel_adapting_2017,bressan_faster_2021}.
A key observation is that many graphs in practice have small degeneracy (e.g., in the order of hundreds for millions of edges)~\cite{eppstein_listing_2013}.
Thus, when looking for, e.g., maximum-size cliques, it is sufficient to look within the small number of neighbors in front of each vertex in the degeneracy ordering.

However, degeneracy is not robust under local changes: E.g., contracting a set of disjoint stars in a graph may arbitrarily increase its degeneracy.
This property makes problems intractable on graphs that have small degeneracy if these problems are less local than finding maximum-size cliques.
For example, detecting mild clique relaxations is hard on graphs of bounded degeneracy~\cite{komusiewicz_algorithmic_2015,huffner_finding_2015}.
Hence, we are searching for robust sparsity measures within the framework of structural sparsity~\cite{nesetril_sparsity_2012}.

We can obtain a robust variant of the class of graphs with bounded degeneracy by using the family of measures called weak $r$\nobreakdash-coloring numbers~\cite{DBLP:journals/order/KiersteadY03}.
For an integer~$r$, the weak $r$\nobreakdash-coloring number (\wrc) of a graph $G$ is the least integer $k$ such that there is a vertex ordering with the property that, for each vertex $v$, there are at most~$k$ vertices~$u$ that are reachable from $v$ by a path~$P$ of length at most~$r$ such that $P$ does not use vertices that come before~$u$ in the ordering.
The integer~$r$, also called radius, interpolates between the degeneracy plus one ($r = 1$) and the so-called treedepth of~$G$ ($r = |V(G)|$~\cite{DBLP:journals/siamdm/GroheKRSS18}), a measure of tree-likeness~\cite{alexpothen_complexity_1988,bodlaender_rankings_1998,katchalski_ordered_1995} and target of a recent implementation challenge~\cite{kowalik_pace_2020}.

\looseness=-1
It is important to compute the \wrc\ of real-world graphs for two reasons.
First, the \wrc\ plays an important role in algorithmic and combinatorial techniques in structural sparsity~(e.g., \cite{akhoondianamiri_distributed_2018,brianski_erdos_2021,dreier_lacon_2021,dvorak_constantfactor_2013,eickmeyer_neighborhood_2017,joret_nowhere_2019,kwon_low_2020,nesetril_clustering_2020,pilipczuk_parameterized_2018,reidl_characterising_2019}).
For instance, a central concept therein is nowhere denseness and a subgraph-closed class of graphs is nowhere dense if and only if for each fixed integer $r$ and $\epsilon > 0$ each graph $G$ in the class has \wrc\ at most~$O(|V(G)|^{\epsilon})$~\cite{siebertz_nowhere_2019a}.
Thus, obtaining the \wrc s of real-world graphs will help us gauge how well this theory fits practice.
Second, there is a prospect that \wrc s will help us solve other computational problems on real-world graphs more efficiently:
On nowhere-dense graph classes each problem expressible in first-order logic can be solved in near-linear time~\cite{grohe_deciding_2017}.
There is indeed indication that relevant classes of real-world networks, including certain scale-free networks, are nowhere dense~\cite{demaine_structural_2019}.
Thus, \wrc s may help us transfer the above theoretical near-linear-time algorithms into something practically useful.
For example, this work is underway for counting subgraphs~\cite{obrien_experimental_2017,reidl_coloravoiding_2020}, which is the underlying computational problem of computing graph motifs or graphlets in biological and social networks~\cite{milo_network_2002,przulj_modeling_2004}.

Computing the \wrc\ of a graph is an algorithmically challenging task: for each $r \geq 2$ it is NP-complete~\cite{DBLP:journals/siamdm/GroheKRSS18,DBLP:journals/corr/abs-2112-10562}.
So far, there is but one work that studies computing upper bounds for \wrc s in real-world graphs:
Nadara et al.~\cite{DBLP:journals/jea/NadaraPRRS19} used greedy heuristics that build the associated vertex ordering by iteratively choosing a yet unordered vertex that seems favorable and putting it at the front (or the back) of the current subordering (an ordering of a subset of all vertices).
Afterwards, they apply local-search techniques that make local shifts in the ordering that decrease the associated weak $r$-coloring number.
This yields upper bounds; to date the true weak $r$-coloring numbers of the studied graphs are unknown, even for the smallest part of Nadara et al.'s dataset that contains graphs with 62 to 930 edges.

\looseness=-1
In this work, we study a paradigm that has previously been successfully applied to improve the quality of the results computed by greedy heuristics: turbocharging~\cite{sepphartung_incremental_2013,downey_dynamic_2014,DBLP:conf/tamc/Abu-KhzamCESW17,DBLP:journals/algorithmica/GaspersGJMR19}.
The basic idea is that we pre-specify an upper bound~$k \in \mathbb{N}$ on the \wrc\ of the ordering that we want to compute.
We carry out the greedy heuristic that computes the ordering iteratively.
If at some point it can be detected that the ordering will yield \wrc\ larger than~$k$ --- the \emph{point of regret} --- we start a turbocharging algorithm.
This algorithm tries to modify the current ordering, by reordering or replacing certain vertices, so as to make it possible to achieve \wrc\ at most~$k$ again.
Then we continue with the greedy heuristic.

\looseness=-1
Our contribution is to develop the turbocharging algorithms applied at the point of regret, obtaining running-time guarantees and lower bounds, and to implement, engineer, and evaluate these algorithms on Nadara et al.'s dataset.
The two main approaches that we study are as follows.
We fix a \emph{reconstruction parameter} $c \in \mathbb{N}$.
At the point of regret, in order to obtain a subordering of \wrc\ at most $k$ we either (a) replace the last $c$ vertices of the current subordering with new, yet unordered vertices or (b) take $c$ vertices out of the ordering and merge them into the subordering at possibly different positions.
The formal definitions are given in \cref{sec:preliminaries}.
We show that both approaches are NP-hard in general (see \cref{sec:complexity}) and hence we also consider the influence of small parameters on the achievable running-time guarantees.
That is, we aim to show \emph{fixed-parameter tractability} by giving algorithms with $f(p) \cdot n^{O(1)}$ running time for a small parameter $p$ and input size $n$.
On the negative side, approach (a) is W[1]-hard with respect to even both~$c$ and~$k$ (\cref{thm:hardicwrcolleft}).
That is, an algorithm with running time $f(c, k) \cdot \poly(n)$ is unlikely, where $n$ is the number of vertices.
This stands in contrast to Gaspers et al.~\cite{DBLP:journals/algorithmica/GaspersGJMR19} who obtained such an algorithm when the goal is to compute the treewidth of the input graph instead of its \wrc.
On the positive side, approach (a) is trivially tractable in polynomial time for constant~$c$.
For approach~(b), we obtain a fixed-parameter algorithm with respect to $c + k$ (\cref{thm:mergefpt}).
We implemented a set of algorithms including the two previously mentioned ones and report on implementation considerations and results in \cref{sec:impexp}.
The results indicate that on average the weak $r$-coloring numbers achieved by Nadara et al.~\cite{DBLP:journals/jea/NadaraPRRS19} can be improved by~5\,\% by using our turbocharging algorithms.
Using turbocharging we obtain smaller weak coloring numbers than all previous approaches on 181 of the in total 334 instances used by Nadara et al.~\cite{DBLP:journals/jea/NadaraPRRS19}.
We also establish a baseline for lower bounds on the coloring numbers in \cref{sec:lower-bound}.

\section{Preliminaries and turbocharging problems}\label{sec:preliminaries}
\subparagraph*{General preliminaries.}
\begin{sloppypar}
We only consider undirected, unweighted graphs $G$ without loops.
By~$V(G)$ and~$E(G)$ we denote the vertex set and edge set of $G$, respectively.
For~$S\subseteq V(G)$,~$G[S]$ is the \emph{induced subgraph} on vertices in~$S$.
A \emph{path}~$P=(v_1,\dots,v_n)$ in~$G$ is a non-empty sequence of vertices, such that consecutive vertices are connected by an edge.
The \emph{length} of a path is~$|V(P)|-1$.
In particular, a path of length~$0$ consists of a single vertex.
We use $\dist_G(u,v)$ to denote the length of the shortest path between vertices~$u$ and~$v$ in~$G$.
\end{sloppypar}

A \emph{vertex ordering}~$L$ of a graph $G$ is a linear ordering of $V(G)$.
We write $u\prec_L v$ if vertex~$u$ precedes vertex~$v$ in~$L$.
In this case we say that~$u$ is \emph{left of}~$v$ w.r.t.~$L$.
Equivalently, we write $u\preceq v$ if $u\prec v$ or $u=v$.
We also denote vertex orderings~$L$ as sequences of its elements, that means $L=(v_1,\dots, v_n)$ represents the vertex ordering where $v_i\prec_L v_j$ iff.\ $i<j$.
We denote by $\Pi(G)$ the set of all vertex orderings of $G$.
A \emph{subordering} is a linear ordering of a subset $S \subseteq V(G)$.
The notation $L_S$ shall always denote a subordering where $S$ is the set of vertices ordered in the subordering.
We call the vertices in \(V(G) \setminus S\) \emph{free} with respect to \(L_S\).
Usually we will denote the set of free vertices with respect to a subordering by \(T\).
For a subordering~$L_S$ and $S'\subseteq S$ we denote by~$L_{S}[S^\prime]$ the subordering that \emph{agrees with}~$L_S$ on~$S^\prime$, that is, for all $u,v\in S^\prime$ we have that $u\prec_{L_S[S^\prime]}v$ iff.\ $u\prec_{L_S} v$, and all vertices in $V(G) \setminus S'$ are free w.r.t.~$L_S[S']$.
For a subordering~$L_S$ and $S^\prime\supseteq S$, a subordering~$L_{S^\prime}$ is a \emph{right extension} of~$L_S$ if $L_{S^\prime}[S]=L_S$ and~$u\prec_{L_{S^\prime}}v$ for all $u\in S$ and $v\in S^\prime\setminus S$.
If~$S^\prime=V$, then~$L_{S^\prime}$ is called \emph{full right extension}.

\subparagraph*{Weak coloring numbers.}
For a vertex ordering~$L$ of~$G$ and $r\in\N$, a vertex~$u$ is \emph{\wreachable} from a vertex~$v$ w.r.t.~$L$ if there exists a path~$P$ of length $\ell$ with $0 \le \ell \le r$ between~$u$ and~$v$ such that $u\preceq_L w$ for all $w\in V(P)$.
Let $\wreach_r(G,L,v)$ be the set of vertices that are \wreachable\ from~$v$ in~$G$ w.r.t.~$L$.
The \emph{\wrcol} $\wcol_r(G,L)$ of a vertex ordering~$L$ is $\wcol_r(G,L)=\max_{v\in V(G)}|\wreach_r(G,L,v)|$,
and the weak $r$-coloring number $\wcol_r(G)$ of~$G$ is $\wcol_r(G)=\min_{L\in\Pi}\wcol_r(G,L)$.
The main decision problem related to \wrcol\ is the following.
\probdef{\wrcolproblong\ (\wrcolprob)}
{A graph $G=(V,E)$, and an integer~$k$.}
{Does~$G$ have weak $r$\nobreakdash-coloring number at most~$k$?}

\subparagraph*{Turbocharging.}
Our goal is to find a vertex ordering~$L$ of a given graph~$G$ with small $\wcol_r(G,L)$ by applying turbocharging.
We start with two well-known iterative greedy heuristics (descriptions will follow) of Nadara et al.\ \cite{DBLP:journals/jea/NadaraPRRS19} that build vertex orderings with small \wrcol\ from left to right.
That is, these heuristics start with the empty subordering~$L_S=\emptyset$ and in each step compute a right extension of~$L_S$ that contains one more vertex.
This process is continued until the constructed subordering contains all vertices.
A key observation about this process that we can use for turbocharging is that the size of the weakly reachable set of each vertex cannot decrease:
\begin{observation}\label{obs:wreach-grows}
  Let~$L_S$ be a subordering, $u,v\in V(G)$, and~$L$ be a full right extension of~$L_S$.
  If
  $u=v$, or
  $u\in S$ and there exists a path $P$ of length $\ell$ with $0\le \ell\le r$ between~$u$ and~$v$ such that
  $u\preceq_{L_S} w$ for all $w\in V(P)\cap S$,
  then $u\in\wreach_r(G,L,v)$.
\end{observation}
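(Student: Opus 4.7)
The plan is to unpack the definition of \wreachable\ applied to the full right extension~$L$ and verify that the witnessing path for $u \in \wreach_r(G, L, v)$ can be taken to be essentially the same as the one provided by the hypothesis. The argument splits naturally along the two cases of the hypothesis.

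First, if $u = v$, the single-vertex path of length $0$ certifies $u \in \wreach_r(G, L, v)$ directly from the definition, with no use of the ordering at all. So the interesting case is the second one.

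In the second case, I would fix the path $P$ of length $\ell \le r$ between $u$ and $v$ guaranteed by the hypothesis, and argue that for every $w \in V(P)$ we have $u \preceq_L w$. I split $V(P)$ into $V(P) \cap S$ and $V(P) \setminus S$. For $w \in V(P) \cap S$, the hypothesis gives $u \preceq_{L_S} w$; since $L$ is a right extension of $L_S$, the orderings agree on $S$, so $u \preceq_L w$. For $w \in V(P) \setminus S$, we have $u \in S$ and $w \notin S$; the definition of a right extension places every vertex of $S$ strictly left of every vertex of $V(G) \setminus S$ in $L$, so $u \prec_L w$. Combining both subcases, the path $P$ has length at most $r$ and satisfies $u \preceq_L w$ for all $w \in V(P)$, which is exactly the definition of $u$ being \wreachable\ from $v$ w.r.t.~$L$.

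There is no real obstacle here: the statement essentially records that right extensions preserve the two ingredients relevant to weak reachability, namely the relative order among vertices already in $S$ and the fact that new vertices are appended to the right. The only minor care needed is to treat vertices of $V(P)$ that lie outside $S$, whose position in $L_S$ is undefined; this is handled cleanly by the right-extension property. I would therefore present the proof as two or three lines, with the case split as outlined.
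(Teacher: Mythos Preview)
Your argument is correct and is exactly the natural direct verification from the definitions. The paper itself states this as an \emph{Observation} without proof, so there is no alternative approach to compare against; your two- or three-line case split is precisely what the observation records.
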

For~$u$ and~$v$ as in \cref{obs:wreach-grows} we extend the definition of weak $r$\nobreakdash-reachability to suborderings~$L_S$ by defining %
$u\in\wreach_r(G,L_S,v)$ and
$\wcol_r(G,L_S)=\max_{v\in V(G)}|\wreach_r(G,L_S,v)|$.
We immediately obtain that $\wcol_r(G,L_S)$ is a lower bound; that is, if~$L$ is a full right extension of~$L_S$ (such as obtained by one of the heuristics), then $\wcol_r(G,L)\ge \wcol_r(G,L_S)$.

The two heuristics of Nadara et al.\ that we apply are:
\begin{itemize}
  \item The Degree-Heuristic: This heuristic orders vertices by descending degree, ties are broken arbitrarily.
  \item The Wreach-Heuristic: For a subordering~$L_S$, this heuristic picks the free vertex~$v$ with the largest $\wreach_r(G,L_S,v)$.
        Ties are broken by descending degree.
\end{itemize}
\looseness=-1
Nadara et al.\ proposed several other heuristics, but those heuristics do not build vertex orderings from left to right, but in different orders.
Additionally, the above heuristics are among the best-performing ones with regard to computed weak coloring numbers and runtime.

\looseness=-1
In what follows, let us assume that we want to compute a vertex ordering~$L$ of a graph~$G$ with $\wcol_r(G,L)\le k$ where $k\in\N$.
We might apply one of the heuristics until we obtain a subordering~$L_S$ such that $\wcol_r(G,L_S)>k$.
We call such a subordering \emph{non-extendable} (and otherwise the subordering is \emph{extendable}); we also say that this point in the execution of the heuristic is the \emph{point of regret}.
We then consider two exact \emph{turbocharging problems} that try to locally augment~$L_S$, such that it is extendable again.
If the \emph{turbocharging algorithms} for these problems that we later propose are successful in making~$L_S$ extendable, then we continue applying the heuristic until we have to repeat this process (trying to find a vertex order~$L$ with $\wcol_r(G,L)\le k$).

Motivated by a turbocharging algorithm for computing tree-decompositions by Gaspers et al.~\cite{DBLP:journals/algorithmica/GaspersGJMR19} we consider replacing a bounded-length suffix of the current subordering.
That is, we specify a \emph{reconstruction parameter}~$c \in \mathbb{N}$ in advance and, at the point of regret, we remove the last $c$ vertices from $L_S$ and then try to add $c$ (possibly) different vertices.
This leads to the following turbocharging problem.
\probdef{\icwrcolleftlong\ (\icwrcolleft)}
{A graph~$G$, a subordering~$L_S$, and positive integers~$k$ and~$c$.}
{Is there an extendable right extension~$L_{S^\prime}$ of $L_S$ such that $|S^\prime\setminus S|=c$?}
Our second turbocharging approach is based on a vertex~$v$ with $|\wreach_r(G,L_S,v)|>k$.
Therein, instead of the suffix of the current order, we choose a set~$S_2$ of vertices related to the weakly $r$\nobreakdash-reachable set of $v$ (details follow in \cref{sec:impexp}).
We remove the vertices in~$S_2$ from~$L_S$, leaving us with the subordering~$L_{S_1}$, and then we try to reinsert the vertices in~$S_2$ while decreasing the weak coloring number.
\probdef{\wcolmerge}
{A graph~$G$, an integer~$k$, two disjoint sets~$S_1$ and~$S_2$ such that $S_1,S_2\subseteq V(G)$, and a subordering~$L_{S_1}$.}
{Is there an extendable subordering~$L_{S_1\cup S_2}$ such that $L_{S_1\cup S_2}[S_1]=L_{S_1}$?}
Herein, we put the \emph{reconstruction parameter $c$} equal to $|S_2|$ and denote by $T$ the set of free vertices $V(G)\setminus (S_1\cup S_2)$.

\section{Algorithms and running-time bounds}\label{sec:complexity}
We continue by providing algorithmic upper and lower bounds for \icwrcolleftlong\ (\icwrcolleft) and \wcolmerge, starting with \icwrcolleft. %
\subsection{IC-WCOL(\texorpdfstring{\boldmath$r$}))}
The first theoretical result that we want to present is the NP-hardness and W[1]-hardness of \icwrcolleftlong\ for each $r\ge 1$ by giving a reduction from \islong.
\islong\ takes as input a graph~$G$ and a positive integer $p$ and asks if there is a set of vertices~$I$ of size at least $p$ such that $\{u,v\}\not\subseteq I$ for all $\{u,v\}\in E(G)$.
The parameter $p$ --- the desired independent set size of a given \islong\ instance --- is transformed to the parameters $c=p$ and $k=2$ of \icwrcolleft, giving us a parameterized reduction from \islong\ to \icwrcolleft.

\begin{theorem}
  \label{thm:hardicwrcolleft}
  For each fixed $r\ge 1$, \icwrcolleftlong\ is NP-hard and $W[1]$-hard when parameterized by $k+c$.
\end{theorem}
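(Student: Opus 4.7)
The plan is to reduce from \islong, which is NP-hard and $W[1]$-hard parameterized by the target solution size~$p$, using the mapping $c = p$ and $k = 2$ promised by the theorem statement. Given an IS-instance $(G,p)$ with $G=(V,E)$, I will build $G'$ by introducing one ``selector'' vertex $v^{\ast}$ for each $v \in V$; the intended correspondence is that the $c = p$ vertices chosen to extend $L_S$ are precisely the selectors of an independent set of size~$p$ in~$G$. To enforce the independent-set constraint, I replace each edge $e = \{u,v\} \in E$ by a subdivision path between $u^{\ast}$ and $v^{\ast}$ whose length is tuned to~$r$ so that (i)~$\dist_{G'}(u^{\ast},v^{\ast}) > r$ and (ii)~some interior vertex $p_e$ of the path has $\dist_{G'}(p_e,u^{\ast}) \le r$ and $\dist_{G'}(p_e,v^{\ast}) \le r$. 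For $r = 1$ a single subdivision vertex suffices; for larger $r$ a path of appropriate length works. The witness $p_e$ then produces a weakly $r$-reachable set of size $\ge 3$ as soon as both $u^{\ast}$ and $v^{\ast}$ are added to the subordering, so the bound $\wcol_r \le 2$ forces the added selectors to induce an independent set in~$G$.

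To prevent ``cheating'' by adding subdivision vertices in place of selectors, I attach to each $v^{\ast}$ a small pendant gadget whose vertices lie in $L_S$, chosen so that $v^{\ast}$ already has a unique weakly $r$-reachable vertex coming from the gadget. This uses up the single available wreach slot at $v^{\ast}$; consequently, adding any subdivision vertex close to $v^{\ast}$ to $T$ drives $|\wreach_r(G', L_{S \cup T}, v^{\ast})|$ to at least~$3$. The ordering of $L_S$ will be picked so that $\wcol_r(G', L_S) = 2$, ensuring the starting configuration is not already infeasible. Correctness then has two sides. Forward: given an independent set $I \subseteq V$ of size~$p$, the extension $T = \{v^{\ast} : v \in I\}$ satisfies $\wcol_r(G', L_{S \cup T}) \le 2$, because no subdivision witness sees two selectors in $T$ and the pendant gadgets are not disturbed. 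Backward: from any valid $T$ of size~$p$, the pendant gadgets force $T \subseteq \{v^{\ast} : v \in V\}$, and the subdivision witnesses force $\{v : v^{\ast} \in T\}$ to be independent in~$G$; since this gives an IS of size~$p$ in~$G$, the reduction is correct, and transferring the hardness of \islong\ yields both NP-hardness and $W[1]$-hardness parameterized by $k + c$.

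The main obstacle I anticipate is the joint calibration of subdivision length and pendant gadget for general $r$: the subdivision has to be long enough that selectors stay at $G'$-distance strictly more than~$r$ from each other, yet short enough that an interior witness $p_e$ still reaches both endpoints within distance~$r$; simultaneously, the pendant gadget of $v^{\ast}$ must occupy exactly one weakly $r$-reachable slot at $v^{\ast}$ without spuriously inflating wreach at any subdivision interior vertex or at another selector. Getting these dimensions mutually consistent for every $r \ge 1$, and then verifying by case analysis over vertex types (selectors, subdivision interiors, and gadget vertices) that $\wcol_r$ stays at most~$2$ in the forward direction while failing for non-independent $T$ in the backward direction, is the core technical work.
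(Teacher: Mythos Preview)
Your reduction skeleton matches the paper's --- reduce from \islong\ with $k=2$ and $c=p$, subdivide each edge so that an interior vertex can witness the adjacency constraint, and seed $L_S$ with anchor vertices that forbid adding non-selectors --- but attaching the anchors to the \emph{selectors} breaks the forward direction for every $r\ge 2$. You want each $u^\ast$ to already weakly $r$-reach one anchor $g\in L_S$; now take $u\in I$ and put $u^\ast$ into the right extension. If $g$ is adjacent to $u^\ast$, then for any edge $\{u,v\}\in E(G)$ the first subdivision vertex $x_1$ adjacent to $u^\ast$ reaches $g$ via the path $x_1\text{--}u^\ast\text{--}g$ (length $2\le r$, with $g$ leftmost in $S'$) and reaches $u^\ast$ directly, so $|\wreach_r(G',L_{S'},x_1)|\ge 3$. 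If $g$ is not adjacent to $u^\ast$, let $w$ be the neighbor of $u^\ast$ on a witnessing path to $g$; since you require $|\wreach_r(G',L_S,u^\ast)|=2$, the vertex $w$ cannot lie in $L_S$ (else $u^\ast$ would already reach both $w$ and $g$), so $w$ is free and after placing $u^\ast$ it reaches both $g$ and $u^\ast$, again size~$3$. Either way a genuine independent set fails to yield an extendable subordering, and no tuning of path lengths avoids this: any anchor reachable from $u^\ast$ inevitably creates such a witness once $u^\ast$ is placed.

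The paper sidesteps this by anchoring each \emph{edge} rather than each selector. For $\{u,v\}\in E(G)$ it subdivides into $u,x_1^{uv},\dots,x_{2r-1}^{uv},v$ and hangs a path $(y_1^{uv},\dots,y_{r+1}^{uv})$, with only $y_1^{uv}\in L_S$, off the middle vertex $x_r^{uv}$. The sentinel $y_{r+1}^{uv}$ is at distance $r+1$ from both $u$ and $v$, so adding selectors leaves its weakly $r$-reachable set at $\{y_1^{uv},y_{r+1}^{uv}\}$; yet it is within distance $r$ of every $x_i^{uv}$, so adding any subdivision vertex drives that set to size~$3$. Detaching the anchor from the selectors is exactly the idea your plan is missing.
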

\begin{figure}[t!]
  \centering
  \includegraphics[]{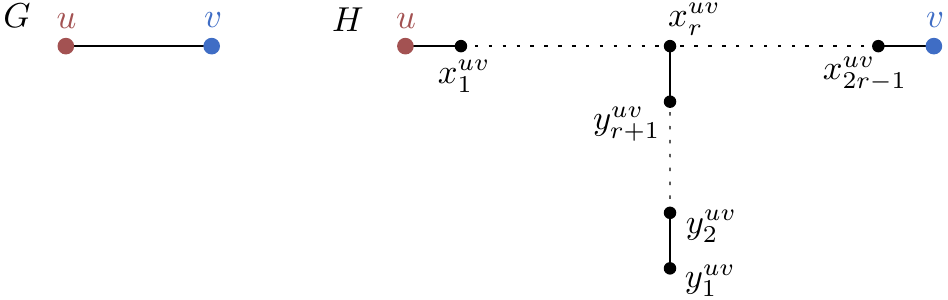}
  \caption{Sketch for the subdivision of edge $\{u,v\}$.}%
  \label{fig:npwrcolleft}
\end{figure}
\begin{proof}
  We give a polynomial and parameterized reduction from \islong\ which is NP-complete \cite{DBLP:books/fm/GareyJ79} and $W[1]$-hard when parameterized by $p$ \cite{DBLP:journals/tcs/DowneyF95} --- the desired independent set.
  Let~$(G,p)$ be an instance of \islong\ ($p$ is the size of the desired independent set).
  We construct an instance $(H,L_S,k,c)$ of \icwrcolleft\ as follows.
  First, let $k=2$ and $c=p$.
  Constructing~$H$ from~$G$ proceeds in two steps.
  \begin{itemize}
    \item We replace each edge~$\{u,v\}$ by a path $(u, x_1^{uv},\dots,x_{2r-1}^{uv}, v)$.
    \item For each edge $\{u,v\}\in E(G)$ we introduce a path $P=(y_{1}^{uv},\dots, y_{r+1}^{uv})$ of~$r+1$ vertices connecting the end vertex $y_{r+1}^{uv}$ to $x_r^{uv}$ (which is the middle vertex on the path between~$u$ and~$v$).  %
  \end{itemize}
  A sketch of the construction for a single edge $\{u,v\}$ is shown in \cref{fig:npwrcolleft}.
  Lastly, we define the subordering~$L_S$ such that~$L_S$ is an arbitrary subordering of vertices $S=\{y_1^{uv}:\{u,v\}\in E(G)\}$.
  We proceed by showing that there is an independent set $I$ of size $p$ if and only if there is an extendable right extension $L_{S^\prime}$ of~$L_S$ with $|S^\prime\setminus S|=c$.

  ``$\Rightarrow$'': Assume that~$G$ has an independent set~$I$ of size $|I|=p=c$.
  Consider an arbitrary right extension $L_{S^\prime}$ of~$L_S$ with $S^\prime\setminus S=I$.
  Clearly, $|S^\prime\setminus S|=c$. We claim that $L_{S'}$ is also extendable.
  To see this, consider $|\wreach_r(H,L_{S^\prime},v)|$ for each vertex in $V(H)$.
  \begin{itemize}
    \item Clearly $\wreach_r(H,L_{S^\prime},v)\subseteq \{v\}$ if $v\in V(G)$, as~$v$ can only weakly $r$\nobreakdash-reach itself in $I$ due to the subdivision of edges.
          Thus, $|\wreach_r(H,L_{S^\prime},v)|\le k$.
    \item As $I$ is an independent set, we also have that $|\wreach_r(H,L_{S^\prime},x_{i}^{uv})|\le 2$ for all $\{u,v\}\in E(G)$ and $1\le i\le 2r-1$.
          This is true because~$u$ and~$v$ are the only potentially weakly $r$\nobreakdash-reachable vertices in $I$, but~$u$ and~$v$ cannot both be in $I$.
          Again $|\wreach_r(H,L_{S^\prime},x^{uv}_i)|\le k$ holds.
    \item Finally, $\wreach_r(H, L_{S^\prime},y^{uv}_j)=\{y_1^{uv},y_j^{uv}\}$ for all $\{u,v\}\in E(G)$ and $1\le j\le r+1$ as all paths to vertices in $I\subseteq V(G)$ have length greater than $r$.
  \end{itemize}
  Summarizing, we have that $\wcol_r(G,L_S)\le k$, as required.

  ``$\Leftarrow$'': Assume that we have found an extendable right extension $L_{S^\prime}$ of~$L_S$ with $|S^\prime\setminus S|=c$.
  We observe that vertices $y^{uv}_i$ with $2\le i\le r+1$ cannot be in $S^\prime\setminus S$ because this would result in $|\wreach_r(H,L_{S^\prime},y^{uv}_j)| > k$ for some $2\le j\le r+1$.
  Consequently, $S^\prime\setminus S$ cannot contain any vertex $x_i^{uv}$, as otherwise $y^{uv}_{r+1}$ could weakly $r$-reach one of the vertices $x_i^{uv}$ appearing left-most in $L_{S^\prime}$.
  Putting this together we have $(S^\prime\setminus S)\subseteq V(G)$. We define $I=(S^\prime\setminus S)$.
  It remains to show that $I$ is an independent set, meaning that we have $\{u,v\}\not\subseteq S$ for edges $\{u,v\}\in E(G)$.
  This has to hold as otherwise $x_r^{uv}$ could weakly $r$-reach both~$u$ and~$v$, meaning that $|\wreach_r(H,L_{S^\prime},x_r^{uv})|\ge k$, which leads to a contradiction.
  Hence, $I$ is an independent set in~$G$ with size~$p$, as required.

  Note that this reduction is polynomial under the assumption that~$r$ is a constant.
  Consequently, \icwrcolleftlong\ is NP-hard.
  Furthermore, as $k=2$ and $c=p$,  we also obtain the stated $W[1]$-hardness.
\end{proof}
We want to note that we can state a similar reduction for every $k>2$.
Namely, for $k>2$ we add $k-2$ vertices for each edge $\{u,v\}$ in the original graph, make them adjacent to $y_2^{uv}$, and add them to the set $S$.
This will result in an offset for the size of weakly $r$\nobreakdash-reachable sets of vertices $y_{r+1}^{uv}$.

\begin{sloppypar}
On the other hand, it is not hard to see that \icwrcolleftlong\ is in~XP: We can simply try placing any of the vertices from $V(G)\setminus S$ into the next free position right of $L_S$.
As there are~$c$ free positions the overall algorithm runs in $\mathcal{O}(|V(G)\setminus S|^c\cdot |V(G)|^{\mathcal{O}(1)})\subseteq \mathcal{O}(|V(G)|^c\cdot |V(G)|^{\mathcal{O}(1)})$ time.\end{sloppypar}
\begin{proposition}
  \label{prop:icwrcolleftxp}
  \icwrcolleftlong\ parameterized by the reconstruction parameter~$c$ is in XP.
\end{proposition}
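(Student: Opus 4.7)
The plan is to give a straightforward brute-force enumeration, which matches the informal sketch preceding the proposition. Fix an instance $(G, L_S, k, c)$ of \icwrcolleft, and let $n = |V(G)|$ and $T = V(G) \setminus S$ be the free vertices. I would enumerate every ordered tuple $(v_1, \dots, v_c)$ of pairwise distinct vertices from $T$; there are at most $|T|^c \leq n^c$ such tuples. For each tuple, form the candidate subordering $L_{S'}$ where $S' = S \cup \{v_1, \dots, v_c\}$ and $L_{S'}$ is the right extension of $L_S$ that places $v_1, \dots, v_c$ in this order to the right of $L_S$. Then test whether $L_{S'}$ is extendable, i.e., whether $\wcol_r(G, L_{S'}) \leq k$, and return yes as soon as some tuple passes the test.

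The only nontrivial ingredient is checking extendability in polynomial time. For this, I would, for every vertex $v \in V(G)$, compute $\wreach_r(G, L_{S'}, v)$ by a bounded-depth search: enumerate all candidate targets $u \in S'$ with $\dist_G(u, v) \leq r$ and check whether there is a path of length at most $r$ from $v$ to $u$ all of whose internal vertices either lie in $T$ or lie in $S'$ and are not strictly left of $u$ in $L_{S'}$. Since $r$ is fixed, this search can be carried out in $n^{O(1)}$ time per pair, using for instance a BFS in the subgraph induced by $\{u\} \cup \{w \in V(G) : w \in T \text{ or } u \preceq_{L_{S'}} w\}$. After computing all weakly reachable sets, one checks that their sizes are all at most $k$.

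The overall running time is therefore $O(n^c \cdot n^{O(1)}) = O(n^{c + O(1)})$, as required for XP. The only obstacle worth mentioning is being careful that the definition of extendability we test matches the one that admits the full right extension guarantee: by \cref{obs:wreach-grows} the sizes $|\wreach_r(G, L, v)|$ for any full right extension $L$ of $L_{S'}$ are lower-bounded by $|\wreach_r(G, L_{S'}, v)|$, so the check $\wcol_r(G, L_{S'}) \leq k$ is necessary, and by definition of extendable (non-extendable means $\wcol_r(G, L_{S'}) > k$) it is also the condition asked for. Hence the algorithm is correct and runs in time $n^{c + O(1)}$, establishing membership in XP.
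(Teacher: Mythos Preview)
Your proposal is correct and follows essentially the same approach as the paper: brute-force enumeration of all $O(n^c)$ ordered $c$-tuples of free vertices as right extensions, with a polynomial-time extendability check for each. The paper's argument is terser (it just states the $\mathcal{O}(|V(G)|^c\cdot |V(G)|^{\mathcal{O}(1)})$ bound without spelling out how to test extendability), whereas you additionally give a concrete BFS-based procedure for computing the weakly $r$-reachable sets; this is a welcome elaboration but not a different idea.
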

Our algorithm for \icwrcolleftlong\ is based on \cref{prop:icwrcolleftxp}, we will go into more detail in \cref{sec:impexp}.

\subsection{WCOL-Merge(\texorpdfstring{\boldmath$r$}))}
It is easy to see that \wcolmerge\ is NP-hard for $r\ge 2$ by giving a reduction from \wrcolproblong:
Given an instance~$(G,k)$ of \wrcolproblong, we create an equivalent instance $(H,S_1,S_2,L_{S_1})$ of \wcolmerge\ by setting $S_1=L_{S_1}=\emptyset$, $H=G$, and $S_2=V(G)$.
As deciding $\wcol_r(G)\le k$ is NP-hard for $r\ge 2$~\cite{DBLP:journals/siamdm/GroheKRSS18,DBLP:journals/corr/abs-2112-10562}, so is \wcolmerge.
On the positive side, we can show fixed-parameter tractability of \wcolmerge\ parameterized by~$k$ and~$|S_2|$.
\begin{theorem}
  \label{thm:mergefpt}
  \wcolmerge\ is solvable in time $\mathcal{O}(|S_2|!\cdot k^{|S_2|}\cdot |V(G)|^{\mathcal{O}(1)})$.
  In particular, \wcolmerge\ is fixed-parameter tractable when parameterized by $k+|S_2|$.
\end{theorem}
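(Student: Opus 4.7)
The plan is to solve \wcolmerge\ by a recursive branching algorithm that, at each recursive call, chooses both which vertex of $S_2$ to place next in the partially constructed ordering and its position within the current partial order. At every step there are at most $|S_2|$ choices for the vertex and at most $k$ choices for its position, yielding a total of at most $|S_2|! \cdot k^{|S_2|}$ leaves in the search tree; at each leaf the candidate full ordering $L_{S_1 \cup S_2}$ is available, and extendability is verified in polynomial time by directly computing $\wreach_r$ for every vertex of $G$.

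The structural claim underlying the $k$-bound on the branching factor is the following: in any extendable completion, the position of each $v \in S_2$ (relative to the currently placed vertices of $S_1$ together with the $S_2$-vertices placed so far) can, without loss of generality, be chosen from a set of at most $k$ canonical candidates. The natural candidates are the positions immediately before each of the at most $k$ vertices in an appropriate weakly $r$-reachable set, together with the default option of placing $v$ at the very right of the current ordering. Since extendability caps $|\wreach_r(G, \cdot, \cdot)|$ at $k$, this yields exactly the claimed $k$-fold branching factor, while the freedom in choosing which unplaced $S_2$-vertex to insert at each step accounts for the $|S_2|!$ factor.

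The main obstacle will be to make the notion of canonical position precise and to prove that restricting to those positions is complete---that is, that every extendable completion agrees on $S_2$'s positions with one of the branches explored. The natural route is an exchange argument: given an extendable completion in which some $v$ sits at a non-canonical position, one shifts $v$ to the nearest canonical position while preserving extendability of the resulting ordering. Carrying this out requires tracking how the weakly $r$-reachable sets of vertices of $S_1$, of the $S_2$-vertices placed so far, and of the free vertices in $T$ evolve under such shifts; the monotonicity of $\wreach_r(G, L_S, \cdot)$ under right extensions recorded in \cref{obs:wreach-grows} will be a useful lever for showing that crossings over ``non-canonical'' vertices can only shrink the relevant sets and hence cannot break extendability.
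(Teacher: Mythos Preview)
Your high-level plan matches the paper's approach: branch over which $v\in S_2$ to insert next and over at most $k$ canonical positions for $v$, and verify extendability at the leaves. The paper makes your ``appropriate weakly $r$-reachable set'' precise via the notion of a \emph{breakpoint}: a vertex $s$ in the current partial order is a breakpoint of $v$ if placing $v$ immediately before versus immediately after $s$ changes $\wreach_r(\cdot,\cdot,v)$. One then shows (\cref{lemma:breakpoints}) that the breakpoints to the left of $v$ are exactly $\wreach_r(\cdot,\cdot,v)\setminus\{v\}$, so among the candidate positions (before each breakpoint, plus the right end) at most $k$ survive the $|\wreach_r|\le k$ filter.

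Two places where your sketch needs sharpening. First, \cref{obs:wreach-grows} is not the right lever for the exchange argument: it concerns right extensions, not swaps of two already-placed vertices within an ordering. The paper instead proves a dedicated lemma (\cref{lemma:fptmerge}) stating that if $s$ is \emph{not} a breakpoint of $v$ then moving $v$ across $s$ leaves \emph{every} vertex's weakly $r$-reachable set unchanged---not merely non-increasing. This is the technical heart of completeness, and it does not follow from monotonicity under right extensions. Second, the paper's exchange argument fixes $v$ to be the \emph{rightmost} vertex of $S_2$ in the target ordering $L_{S_1\cup S_2}$; this guarantees that sliding $v$ rightward toward its nearest breakpoint (or the end) crosses only $S_1$-vertices, so the breakpoint analysis---carried out in the subgraph $G[S_1\cup T\cup\{v\}]$ that deliberately ignores the still-unplaced $S_2$-vertices---applies cleanly. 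If you shift an arbitrary $v$ instead, you may cross other $S_2$-vertices whose positions are not yet determined, and then the swap lemma no longer applies as stated.
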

\noindent
Intuitively, the algorithm behind \cref{thm:mergefpt} tries to place each vertex~$v$ in $S_2$ one by one by trying all relevant positions.
The key insight is that only few positions are relevant (called breakpoints below).
Namely, those positions that correspond to vertices that are weakly $r$\nobreakdash-reachable from~$v$ when placed at the end of the ordering.
As only few vertices can be reachable from $v$ when placed in the correct position, we only need to try the first~$k$ corresponding positions.

To describe the algorithm we need definitions for two operations that we use throughout.
\begin{definition}
  \label{def:placeafter}
  Let~$G$ be a graph, $L_S=(s_1,\dots,s_n)$ a subordering, and $v\in V(G)\setminus S$.
  We denote by $\placeafter(L_S,s_i,v)$ the subordering of vertices $S\cup \{v\}$ that is obtained by placing~$v$ directly after $s_i$.
  To be precise, $\placeafter(L_S,s_i,v) := (s_1,\dots,s_i,v,s_{i+1},\dots ,s_n)$.
  Equivalently, $\placebefore(L_S,s_i,v) := (s_1,\dots,s_{i-1},v,s_{i},\dots ,s_n)$.
\end{definition}
This leads to the definitions of breakpoints, which are crucial for the proof of \cref{thm:mergefpt}.
\begin{definition}
  Let~$G$ be a graph, $L_S=(s_1,\dots ,s_n)$ a subordering, and $v\in V(G)\setminus S$.
  A vertex $s\in S$ is called \emph{breakpoint} of~$v$ if
  \[\wreach_r(G,\placebefore(L_S,s,v),v)\ne \wreach_r(G,\placeafter(L_S,s,v),v).\]
  Let $\mathrm{bp}(G,L_S,v)\subseteq S$ be the set of breakpoints of~$v$.
\end{definition}
We also notice another useful property of breakpoints.
\begin{lemma}
  \label{lemma:fptmerge}
  Let $v\in V(G)\setminus S$. We have $s\not\in \mathrm{bp}(G,L_S,v)$ if and only if for all $u\in V(G)$
  \[\wreach_r(G,\placebefore(L_S,s,v),u)=\wreach_r(G,\placeafter(L_S,s,v),u).\]
\end{lemma}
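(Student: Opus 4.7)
The plan is to prove the two implications separately. The ``if'' direction is immediate by specializing the assumed equality to $u = v$, which yields exactly $\wreach_r(G, \placebefore(L_S, s, v), v) = \wreach_r(G, \placeafter(L_S, s, v), v)$, i.e., $s \notin \mathrm{bp}(G, L_S, v)$. For the ``only if'' direction, I would abbreviate $L^- := \placebefore(L_S, s, v)$ and $L^+ := \placeafter(L_S, s, v)$, fix an arbitrary $u \in V(G)$, and show $\wreach_r(G, L^-, u) = \wreach_r(G, L^+, u)$.

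The observation driving the argument is that $L^-$ and $L^+$ agree on the relative order of every pair of vertices except $\{v, s\}$, whose relative order flips. If $s$ has position $i$ in $L_S$, then in $L^-$ vertex $v$ occupies position $i$ and $s$ occupies position $i+1$, while in $L^+$ these two positions swap; every vertex $x \notin \{v, s\}$ retains exactly the same position in both orderings.

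Given this, I would consider a simple path $P$ witnessing $w \in \wreach_r(G, L^-, u)$. The property ``$w$ is leftmost on $P$'' reduces to comparisons of $w$ with vertices on $P$, and any comparison not involving the pair $\{v, s\}$ has the same outcome in $L^+$. Therefore, if $w \notin \{v, s\}$ the witness transfers verbatim to $L^+$. If $w = s$ and $v \in V(P)$, then $v \prec_{L^-} s$ contradicts $s$ being leftmost, so the subcase is vacuous; if $w = s$ and $v \notin V(P)$, the orderings are identical on $V(P)$ and the witness transfers. The only remaining obstruction is $w = v$ with $s \in V(P)$, and symmetrically in the reverse direction the only obstruction is an $L^+$-witness with $w = s$ and $v \in V(P)$.

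The main step to close the proof is to show that each such obstruction forces $s \in \mathrm{bp}(G, L_S, v)$, contradicting the hypothesis. For the first obstruction, let $P_1 \subseteq P$ be the sub-path from $v$ to $s$. Since $v$ is leftmost on $P_1$ in $L^-$, any interior vertex $x$ of $P_1$ has position at least $i$ in $L^-$, and because $x \notin \{v, s\}$ its position is in fact at least $i + 2$. These positions are unchanged when passing to $L^+$, while $s$ moves to position $i$; hence $s$ is leftmost on the reversal of $P_1$ in $L^+$, which witnesses $s \in \wreach_r(G, L^+, v)$. Combined with the trivial fact $s \notin \wreach_r(G, L^-, v)$ (because $v \prec_{L^-} s$), this yields $s \in \mathrm{bp}(G, L_S, v)$. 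For the symmetric obstruction, a path $P'$ witnessing $s \in \wreach_r(G, L^+, u)$ with $v \in V(P')$ restricts to a sub-path from $s$ to $v$ on which $s$ is still leftmost in $L^+$, directly certifying $s \in \wreach_r(G, L^+, v)$. The only mild difficulty is the positional bookkeeping, which is elementary once the convention for ``before'' versus ``after'' is fixed.
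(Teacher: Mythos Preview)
Your proof is correct and follows essentially the same approach as the paper: the ``if'' direction by specializing to $u=v$, and the ``only if'' direction by showing that any witnessing path where the weak reachability differs must contain both $v$ and $s$, then restricting to the subpath between them to certify $s\in\mathrm{bp}(G,L_S,v)$. Your presentation is slightly more explicit in the case analysis (handling $w=v$ and $w=s$ separately in each inclusion), whereas the paper argues the contrapositive more compactly, but the key step---extracting the $v$--$s$ subpath and observing that it yields $s\in\wreach_r(G,L^+,v)\setminus\wreach_r(G,L^-,v)$---is identical.
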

\begin{proof}
  Let $L^a=\placebefore(L_S,s,v)$ and $L^b=\placeafter(L_S,s,v)$
  If 
  \[\forall u\in V(G):\wreach_r(G,L^b,u)=\wreach_r(G,L^a,u),\] 
  then $s\not\in \mathrm{bp}(G,L_S,v)$ by setting $u=v$ and by the definition of breakpoints. 
  
  Thus, assume that there exists $u\in V(G)$ such that 
  \[\wreach_r(G,L^b,u)\ne\wreach_r(G,L^a,u).\]
  If $u=v$ we are done. Otherwise, there exists $x\in S\cup \{v\}$ such that either $x\in \wreach_r(G,L^b,u)\setminus\wreach_r(G,L^a,u)$ or $x\in \wreach_r(G,L^a,u)\setminus\wreach_r(G,L^b,u)$. In both cases there exists a path $P$ that witnesses $x\in \wreach_r(G,L^a,u)$ or $x\in \wreach_r(G,L^b,u)$. 
  If $V(P)\cap \{s,v\}\ne \{s,v\}$ then $L^a[V(P)]=L^b[V(P)]$, yielding a contradiction to $x\not\in \wreach_r(G,L^b,u)\cap\wreach_r(G,L^a,u)$.
  Hence, $v,s\in V(P)$.
  Furthermore, $v$ and $s$ must be the leftmost vertices on $P$ w.r.t.\ $L^a$ and $L^b$, as otherwise $x\in \wreach_r(G,L^b,u)\cap\wreach_r(G,L^a,u)$ again.
  When considering the subpath of $P$ between $v$ and $s$, we get $s\in \wreach_r(G,L^a,v)$, but we cannot have $s\in \wreach_r(G,L^b,v)$ as $v\prec_{L^b} s$.
  It follows that $s\in \mathrm{bp}(G,L_S,v)$ which completes the proof.
\end{proof}

\looseness=-1
If we add a vertex~$v$ to a subordering~$L_{S}$ to obtain a new subordering $L_{S\cup\{v\}}$, then the weakly reachable vertices $\wreach_r(G,L_{S\cup \{v\}}, v)$ consist of~$v$ and a subset of~$\mathrm{bp}(G,L_S,v)$.
We formalize this as follows.
\begin{lemma}
  \label{lemma:breakpoints}
  Let~$G$ be a graph, $L_S=(s_1,\dots ,s_n)$ be a subordering, and $v\in V(G)\setminus S$.
  Furthermore, let~$L_{S\cup \{v\}}$ be a subordering such that $L_{S\cup \{v\}}[S]=L_S$.
  Then
  \[\wreach_r(G,L_{S\cup \{v\}},v)\setminus\{v\}=\{s\in \mathrm{bp}(G,L_S,v)\mid s\preceq_{L_{S\cup \{v\}}}v\}.\]
\end{lemma}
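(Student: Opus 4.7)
The plan is to verify both set inclusions directly from the definitions. Throughout, write $L^a = \placebefore(L_S, s, v)$ and $L^b = \placeafter(L_S, s, v)$, and recall that $L^a$, $L^b$, and $L_{S \cup \{v\}}$ all restrict to $L_S$ on $S$, so the relative order of the vertices in $S$ is identical in these three orderings.

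For the inclusion ``$\subseteq$'', I would start from $s \in \wreach_r(G, L_{S\cup\{v\}}, v) \setminus \{v\}$, witnessed by a path $P$ of length at most $r$ with $s \preceq_{L_{S\cup\{v\}}} w$ for every $w \in V(P) \cap (S \cup \{v\})$. Specializing to $w = v$ immediately gives $s \preceq_{L_{S \cup \{v\}}} v$. To show $s \in \mathrm{bp}(G, L_S, v)$, I would observe that $s \notin \wreach_r(G, L^a, v)$ simply because $v \prec_{L^a} s$, while the same path $P$ witnesses $s \in \wreach_r(G, L^b, v)$: the conditions on $V(P) \cap S$ transfer because the $S$-order is preserved, and the condition at $w = v$ holds because $s \prec_{L^b} v$ by construction of $L^b$.

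The inclusion ``$\supseteq$'' is more delicate. Assume $s \in \mathrm{bp}(G, L_S, v)$ and $s \preceq_{L_{S\cup\{v\}}} v$. By definition of breakpoint, $\wreach_r(G, L^a, v) \neq \wreach_r(G, L^b, v)$, so some vertex $x$ lies in their symmetric difference. The key step is to argue that $x$ must equal $s$. The vertex $v$ itself is in both $\wreach$ sets by the $u = v$ clause of \cref{obs:wreach-grows}, so $x \neq v$. If $x \in S \setminus \{s\}$ lies strictly left of $s$ in $L_S$, then $x$ precedes $v$ in both $L^a$ and $L^b$, and any witness path survives the swap unchanged because $L^a$ and $L^b$ agree on $S$; hence $x$ belongs to both sets, a contradiction. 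If instead $x$ lies strictly right of $s$ in $L_S$, then $v$ precedes $x$ in both $L^a$ and $L^b$, so $x$ cannot weakly $r$-reach $v$ in either, again a contradiction. Therefore $x = s$, and since $s \notin \wreach_r(G, L^a, v)$ as above, we conclude $s \in \wreach_r(G, L^b, v)$.

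It then remains to transfer a witness path $P$ for $s \in \wreach_r(G, L^b, v)$ to the ordering $L_{S\cup\{v\}}$: the conditions on $V(P) \cap S$ hold because $L^b$ and $L_{S\cup\{v\}}$ agree on $S$, and the condition at $w = v$ is precisely the hypothesis $s \preceq_{L_{S\cup\{v\}}} v$. This yields $s \in \wreach_r(G, L_{S\cup\{v\}}, v)$ and completes the proof. The main obstacle is the symmetric-difference argument in the third paragraph: the case distinction on the position of $x$ relative to $s$ in $L_S$ must be carried out carefully, using that $L^a$ and $L^b$ differ only by swapping the relative order of $s$ and $v$ while leaving the rest of the ordering fixed. (One can also shortcut this step by invoking \cref{lemma:fptmerge} to observe that any discriminating witness for $s$ being a breakpoint can be re-attributed to $v$ itself, and then to $s$ by the above positional argument.)
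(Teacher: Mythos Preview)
Your proof is correct and follows the same overall approach as the paper: for ``$\subseteq$'' you transfer the witness path from $L_{S\cup\{v\}}$ to $L^b$ and note $s\notin\wreach_r(G,L^a,v)$ trivially; for ``$\supseteq$'' you establish $s\in\wreach_r(G,L^b,v)$ and transfer the witness path back to $L_{S\cup\{v\}}$. The only difference is that the paper asserts ``$s$ must be weakly $r$\nobreakdash-reachable from~$v$ w.r.t.\ $\placeafter(L_S,s,v)$'' without argument, whereas you supply the symmetric-difference case analysis that justifies it---so your version is in fact more complete on this point.
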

\begin{proof}
  Let $X$ be the set $\{s\in \mathrm{bp}(G,L_S,v)\mid s\preceq_{L_{S\cup \{v\}}}v\}$, we prove both inclusions of the equation $\wreach_r(G,L_{S\cup \{v\}},v)\setminus\{v\}=X$.

  Assume that $s\in(\wreach_r(G,L_{S\cup \{v\}},v)\setminus\{v\})$.
  As~$s$ is weakly $r$\nobreakdash-reachable from~$v$, there is a path $P=(v,u_1,\dots,u_\ell,s)$ of length of at most~$r$ that does not go left of~$s$ w.r.t.\ to $L_{S\cup \{v\}}$.
  Consider the same path~$P$ in $\placeafter(L_S,s,v)$.
  Clearly,~$s$ is also weakly $r$\nobreakdash-reachable in this subordering because of the same path.
  Contrary to that,~$s$ cannot be weakly $r$\nobreakdash-reachable from~$v$ in $\placebefore(L_S,s,v)$ because~$v$ is left of~$s$ in that subordering.
  Hence, $s\in \mathrm{bp}(G,L_S,v)$ and $\wreach_r(G,L_{S\cup \{v\}},v)\setminus\{v\}\subseteq X$.

  Assume that $s\in X$. Then~$s$ must be weakly $r$\nobreakdash-reachable from~$v$ w.r.t.\ $\placeafter(L_S,s,v)$ through a path~$P$ of length at most~$r$.
  But~$s$ is also weakly $r$\nobreakdash-reachable from~$v$ w.r.t.~$L_{S\cup \{v\}}$ through the same path~$P$.
  Hence, $X\subseteq \wreach_r(G,L_{S\cup \{v\}},v)\setminus\{v\}$ also holds.
\end{proof}
\begin{algorithm}[!t]
  \SetKw{False}{\textsf{\textup{false}}}
  \SetKwFunction{FMain}{Recursive-merge}
  \SetKwProg{Fn}{}{:}{}
  \Fn{\FMain{$S_1$, $S_2$, $T$, $L_{S_1}$}}{
  \lIf{$|S_2|=0\land \forall v\in V(G):|\wreach_r(G,L_{S_1},v)|\le k$}{
    \Return{$L_{S_1}$}
  }
  \For{$v\in S_2$}{
  \For{$s\in \mathrm{bp}(G[S_1\cup T\cup \{v\}],L_{S_1},v)$}{
  $L_{S_1\cup \{v\}}\gets\placebefore(L_{S_1},s,v)$;\\
  \If{$|\wreach_r(G[S_1\cup T\cup \{v\}],L_{S_1\cup \{v\}},v)|\le k$}{\label{line:fptmergeif1}
  $\textup{\textsf{answer}} \gets {}$\FMain{$S_1\cup \{v\},S_2\setminus \{v\},T,L_{S_1\cup \{v\}}$};\label{line:fptmergerec1}\\
  \lIf{\textup{\textsf{answer}}${}\ne{}$\False}{\Return{\textup{\textsf{answer}}}}
  }
  }
  $s\gets$rightmost vertex of $S_1$ w.r.t.~$L_{S_1}$;\\
  $L_{S_1\cup \{v\}}\gets\placeafter(L_{S_1},s,v)$;\\
  \If{$|\wreach_r(G[S_1\cup T\cup \{v\}],L_{S_1\cup \{v\}},v)|\le k$}{\label{line:fptmergeif2}
  \textup{\textsf{answer}}${}\gets{}$\FMain{$S_1\cup \{v\},S_2\setminus \{v\},T,L_{S_1\cup \{v\}}$};\label{line:fptmergerec2}\\
  \lIf{\textup{\textsf{answer}}${}\ne{}$\False}{\Return{\textup{\textsf{answer}}}}
  }
  }
  \Return{\False}}
  \caption{Recursive FPT-algorithm for \wcolmerge}
  \label[algorithm]{alg:mergefpt}
\end{algorithm}

Using the above tools, we can now formally describe \cref{alg:mergefpt}, which obtains the stated runtime of \cref{thm:mergefpt} and is given as a recursive function \textsc{Recursive-merge}.
As alluded to before, the intuition is that for each vertex $v\in S_2$ we only have to consider placing it before its breakpoints w.r.t.~$L_{S_1}$.
As the breakpoints of a vertex will be in its weakly $r$\nobreakdash-reachable set, only the leftmost~$k$ breakpoints are relevant.
\begin{proof}[Proof of \cref{thm:mergefpt}]
  We give a recursive search tree algorithm for \wcolmerge\ in time $\mathcal{O}(|S_2|!\cdot k^{|S_2|}\cdot n_G^{\mathcal{O}(1)})$.
  This algorithm is given in \cref{alg:mergefpt} as a recursive function.
  Given an instance $(G,S_1,S_2,L_{S_1})$ of \wcolmerge, the given function can be invoked as \textsc{Recursive-merge}($S_1,S_2,V(G)\setminus(S_1\cup S_2),L_{S_1}$), and will either return an extendable subordering $L_{S_1\cup S_2}$ such that $L_{S_1\cup S_2}[S_1]=L_{S_1}$, or it will return \textsf{false} if no such subordering exists.
  The set $V(G)\setminus (S_1\cup S_2)$ is the set of free vertices~$T$.
  In a recursive call, the algorithm places one vertex $v\in S_2$ into~$L_{S_1}$ and thus decreases the cardinality of $S_2$ and increases the cardinality of $S_1$.
  Furthermore, in each recursive call the following new calls are created for each vertex $v\in S_2$.
  \begin{enumerate}
    \item[(1)] For each vertex $s\in \mathrm{bp}(G[S_1\cup T\cup\{v\}],L_{S_1},v)$ consider $L_{S_1\cup\{v\}}=\placeafter(L_{S_1},s,v)$.
          If $|\wreach_r(G[S_1\cup T\cup\{v\}],L_{S_1\cup\{v\}},v)|\le k$, then create a new recursive call $(S_1\cup \{v\},S_2\setminus \{v\},T,L_{S\cup\{v\}})$.
    \item[(2)] Let $L_{S_1\cup\{v\}}$ be the subordering obtained from~$L_{S_1}$ by placing~$v$ at the right end.
          If we have that $|\wreach_r(G[S_1\cup T\cup\{v\}],L_{S_1\cup\{v\}},v)|\le k$, then create a new recursive call $(S_1\cup \{v\},S_2\setminus \{v\},T,L_{S_1\cup\{v\}})$.
  \end{enumerate}
  Notice that we are only considering the induced subgraph $G[S_1\cup T\cup\{v\}]$, as we do not know where the remaining vertices of $S_2$ will be placed.
  We want to show that in this way only~$k$ new recursive calls are created for each vertex $v\in S_2$ and that the algorithm finds a solution (subordering $L_{S_1\cup S_2}$) if there exists one.

  \proofparagraph{Correctness.}
  Assume that we are in a recursive call $(S_1,S_2,T,L_{S_1})$ of the algorithm and that there exists an extendable subordering $L_{S_1\cup S_2}$ such that $L_{S_1\cup S_2}[S_1]=L_1$.
  We show that a (possibly different) subordering $L_{S_1\cup S_2}^\prime$ with these properties can be found in one of the generated new recursive calls.
  Consider the rightmost vertex~$v\in S_2$ w.r.t.\ $L_{S_1\cup S_2}$.
  There are two possibilities.
  \begin{itemize}
    \item There is a breakpoint of~$v$ (w.r.t.~$L_{S_1}$) right of (w.r.t.\ $L_{S_1\cup S_2}$)~$v$, that is $\mathrm{bp}(G[S_1\cup T\cup\{v\}],L_{S_1},v)\cap\{s\in L_{S_1}\mid v\prec_{L_{S_1\cup S_2}}s\}\ne \emptyset$.
          Then consider the leftmost breakpoint $s\in \mathrm{bp}(G[S_1\cup T\cup\{v\}],L_{S_1},v)$ that is right of~$v$ w.r.t.\ $L_{S_1\cup S_2}$.
          We shift~$v$ left of~$s$, creating a new subordering $L_{S_1\cup S_2}^\prime$ where the weakly $r$\nobreakdash-reachable sets of all vertices do not change because of \cref{lemma:fptmerge};
          that is, we replace $(v,u_1,\dots,u_\ell,s)$ by $(u_1,\dots,u_\ell, v,s)$, where $(v,u_1,\dots,u_\ell,s)$ is the consecutive part of $L_{S_1\cup S_2}$ between~$v$ and~$s$.
          Weakly $r$\nobreakdash-reachable sets cannot change because none of $u_1,\dots,u_\ell$ are in $\mathrm{bp}(L_{S_1},v)$, and~$v$ is the rightmost vertex of $S_2$ w.r.t.\ $L_{S_1\cup S_2}$.
          Clearly, $L_{S_1\cup S_2}^\prime$ can be found in a recursive call generated from (1) in Line~\ref{line:fptmergerec1} because we try all breakpoints for every vertex $v\in S_2$.
    \item There is no breakpoint of~$v$ (w.r.t.~$L_{S_1}$) right of (w.r.t.\ $L_{S_1\cup S_2}$)~$v$, that is, $\mathrm{bp}(G[S_1\cup T\cup\{v\}],L_{S_1},v)\cap\{s\in L_{S_1}\mid v\prec_{L_{S_1\cup S_2}}s\}= \emptyset$.
          Then we can shift~$v$ to the end of $L_{S_1\cup S_2}$, creating a new subordering $L_{S_1\cup S_2}^\prime$ where weakly $r$\nobreakdash-reachable sets do not change;
          that is, we replace $(v,u_1,\dots,u_\ell)$ by $(u_1,\dots,u_\ell,v)$, where $(v,u_1,\dots,u_\ell)$ is the consecutive part of $L_{S_1\cup S_2}$ right of~$v$.
          Again, weakly $r$\nobreakdash-reachable sets cannot change because none of $u_1,\dots,u_\ell$ are in $\mathrm{bp}(L_{S_1},v)$, and~$v$ is the rightmost vertex of~$S_2$ w.r.t.\ $L_{S_1\cup S_2}$.
          The subordering $L_{S_1\cup S_2}^\prime$ can be found in a recursion call generated from (2) in Line~\ref{line:fptmergerec2}.
  \end{itemize}
  Notice that none of the if-statements in Line~\ref{line:fptmergeif1} and Line~\ref{line:fptmergeif2} of \cref{alg:mergefpt} contradict the correctness because the cardinality of the weakly $r$\nobreakdash-reachable set of~$v$ can only increase in subsequent recursive calls.

  \proofparagraph{Runtime.} We show that in each recursion call at most $k\cdot |S_2|$ new recursion calls are created.
  As the size of $S_2$ decreases by one in each of these new calls, the stated runtime follows.
  But this is a direct consequence of \cref{lemma:breakpoints}, as if a vertex~$v$ is placed to the right of one of its breakpoints $s\in \mathrm{bp}(G[S_1\cup T\cup\{v\}],L_{S_1},v)$, then~$s$ will be in the weakly $r$\nobreakdash-reachable set of~$v$.
  As we do not recurse if the size of the weakly $r$\nobreakdash-reachable set of~$v$ exceeds~$k$ (if-statements in Line~\ref{line:fptmergeif1} and Line~\ref{line:fptmergeif2}), we generate at most~$k$ recursive calls for each vertex $v\in S_2$.
  The rest of the algorithm can be implemented in polynomial time, resulting in an overall runtime bounded by $\mathcal{O}(|S_2|!\cdot k^{|S_2|}\cdot |V(G)|^{\mathcal{O}(1)})$.
\end{proof}

\section{Implementation and experiments}\label{sec:impexp}
This section contains implementation details for the heuristics and both turbocharging algorithms.
Furthermore, we give the experimental setup and experimental results.
\subsection{Algorithm implementations and heuristic improvements}
Interesting details are omitted from the algorithmic results of \cref{sec:complexity}.
We want to give some implementation details for \icwrcolleft\ and \wcolmerge, and how the algorithms for these problems are combined with the heuristics.
Additionally, we give a third turbocharging approach called \textsc{IC-WCOL-RL($r$)} in this section.

In our implementations of heuristics and turbocharging algorithms we store and update the current subordering~$L_S$ in a simple array.
We also store and update the sets $\wreach_r(G,L_S,v)$ and $\wreach^{-1}_r(G,L_S,v)=\{w\in V(G):v\in\wreach_r(G,L_S,w)\}$ (see below for their usage).
Updating weakly $r$\nobreakdash-reachable sets during placements and removals of vertices~$v$ is done by computing the set $\wreach_r^{-1}(G,L_S,v)$ via a breadth-first search that respects the order~$L_S$, and updating the corresponding weakly $r$\nobreakdash-reachable sets.

\looseness=-1
Additionally, we slightly adapt the Degree-Heuristic:
We aim for vertex orderings $L$ with $\wcol_r(G,L)\le k$.
Consider a subordering $L_S$ that was created by the heuristic and needs to be extended.
To obtain weak coloring number~$k$ it would intuitively make sense to place a free vertex~$v$ with $\wcol_r(G,L_S)=k$ immediately to the right of that subordering s.t.\ its weakly $r$\nobreakdash-reachable set cannot increase anymore.
This is indeed always correct --- if there is a full right extension $L$ of $L_S$ with $\wcol_r(G,L)\le k$, then there is another one that starts by placing $v$ immediately to the right of $L_S$ (see \cref{prop:fullverticesnext} of \cref{appendix:theory} for a formal statement and a proof).
In our implementation of the Degree-Heuristic we apply this observation and place such a vertex $v$ immediately.
The Wreach-Heuristic does this implicitly.

We continue by explaining individual details for \icwrcolleft\ and \wcolmerge, and how they are applied to a non-extendable subordering~$L_S$ with free vertices~$T$.

\subparagraph*{IC-WCOL(\boldmath$r$).}
We have implemented the XP-algorithm for \icwrcolleftlong\ as outlined in \cref{prop:icwrcolleftxp}.
Given a subordering~$L_S$, we have to extend~$L_S$ to the right by~$c$ vertices, that means that we have~$c$ positions to fill.
We implement a search tree algorithm that fills these positions from left to right recursively.
That is, in a search tree node we try all possibilities of placing a free vertex into the leftmost free position~$i$ and recurse into search tree nodes that try placing the remaining free vertices into position $i+1$, and so on, until all~$c$ positions are filled.

If after a placement of a vertex we obtain a non-extendable subordering, we can cut off this branch of the search tree, as weakly $r$\nobreakdash-reachable sets of vertices can only increase in this branch.
We also store edges of~$G[T]$ separately as an array of hash sets.
This enables, in a search tree node, to update the sets $\wreach_r^{-1}(G,L_S,v)$ on placement/removal by a simple depth-$r$ breadth-first-search in~$G[T]$. This decreases the number of enumerated edges compared to the trivial approach.

Free vertices~$T$ are placed into a position~$i$ in a specific order inside a search tree node:
let~$L_S$ be the non-extendable subordering that triggered turbocharging and let~$v$ be the rightmost vertex of~$L_S$.
We try placing $u_1\in T$ before $u_2\in T$ into the free slot~$i$ if $\dist_G(u_1,v)<\dist_G(u_2,v)$.
Preliminary experiments suggested that this order is preferable to a random one.
We compute $\dist_G(u,v)$ for all $u,v$ with Johnson's Algorithm \cite{DBLP:books/daglib/0023376} for sparse graphs once in the beginning.

\subparagraph*{WCOL-Merge(\boldmath$r$).}
We apply \wcolmerge\ to a non-extendable subordering~$L_S$ in the following way.
Let $U=\{v\in V(G):|\wreach_r(G,L_S,v)|>k\}$ be the set of~\emph{overfull} vertices.
Let~$c$ be a positive integer and let $X$ be a random subset of $\bigcup_{v\in U}\wreach_r(G,L_S,v)$ of size $\min(c,|\bigcup_{v\in U}\wreach_r(G,L_S,v)|)$.
If the size of~$X$ is less than~$c$, we randomly add additional vertices from~$V(G)$ to~$X$, until the size of~$X$ is~$c$.
We try to fix~$L_S$ by defining an instance of \wcolmerge.
Namely, we set $S_1=S\setminus X$, $S_2=X$, and $L_{S_1}=L_S[S_1]$.
We then solve this instance using \cref{alg:mergefpt}.
By \cref{thm:mergefpt} we obtain a turbocharging algorithm that has fixed-parameter tractable running time when parameterized by the desired coloring number~$k$ and reconstruction parameter~$c$.
In our implementation we apply \wcolmerge\ multiple times with different randomly selected sets~$X$ as defined above, until we obtain an extendable subordering.
Preliminary experiments showed that choosing the whole set $\bigcup_{v\in U}\wreach_r(G,L_S,v)$ as~$X$ leads to timeouts often whereas random subsets still allowed us to fix~$L_S$.
If we do not find an extendable subordering after the 10th application of \wcolmerge, we report that turbocharging was not successful.

\looseness=-1
We now discuss the implementation of \cref{alg:mergefpt}.
Consider the vertex~$v$ in \cref{alg:mergefpt}.
We only have to iterate over the~$k$ leftmost breakpoints of~$v$ due to \cref{lemma:breakpoints}, which can be easily done by storing and maintaining $\wreach_r^{-1}(G[S_1\cup T\cup\{v\}],L_{S_1\cup T}, v)$.
Let~$s$ be a breakpoint of~$v$ and let $L_{S_1\cup\{v\}}=\placeafter(L_{S_1},s,v)$.
The leftmost $s^\prime\in \wreach_r^{-1}(G[S_1\cup T\cup\{v\}],L_{S_1\cup\{v\}},v)$ that is not~$v$ is the next possible breakpoint of~$v$.
Additionally, we do not need to recurse if the size of some set $\wreach_r(G[S_1\cup T],L_{S_1},v)$ exceeds~$k$ for some $v\in S_1\cup T$, as these sets can only increase in subsequent recursion calls.

We also know that subsets of some weakly $r$\nobreakdash-reachable sets of vertices~$T$ are already fixed.
Namely, for all vertices~$v$ in the $r$\nobreakdash-neighborhood of~$u$ in~$G[T\cup \{u\}]$ with $u\in S_2$, vertex~$u$ will always be in the weakly $r$\nobreakdash-reachable set of~$v$ if~$u$ is placed somewhere into the subordering~$L_{S_1}$.
We take this into account when calculating lower bounds for the sizes of weakly $r$\nobreakdash-reachable sets of vertices in~$T$ (and break the search if they exceed size~$k$).

\subparagraph*{IC-WCOL-RL(\boldmath$r$).}\looseness=-1
We also implemented a turbocharging algorithm that is not discussed above.
It is based on the \emph{Sreach-Heuristic}, which builds a vertex ordering of low weak coloring number from right to left (instead of from left to right as above)~\cite{DBLP:journals/jea/NadaraPRRS19}.
It starts with an empty subordering~$L_S$ and, during each step, the heuristic adds to the left front of $L_S$ a free vertex~$v$ that minimizes the number of so-called potentially strongly $r$-reachable vertices after being placed.
Herein, a vertex $u$ is \emph{potentially strongly $r$-reachable} w.r.t.\ $L_S$ from a vertex $v \in S$ if $u \in \wreach_r(G[S], L_S, v)$ or there is a path $P$ of length at most~$r$ from $v$ to $u$ in $G$ such that $V(P) \cap T = \{u\}$.
It can be shown (refer to a formal statement and a proof in \cref{lem:leftextfixed} in the appendix) that the set of potentially strongly $r$-reachable vertices of $v$ only grows when extending $L_S$ to the left and that, when $S = V(G)$, this set equals $\wreach_r(G, L_S, v)$.
Thus, we define a \emph{point of regret} for this heuristic as a point in the execution where there is a vertex~$v$ such that the size of its potentially strongly $r$-reachable set exceeds the desired weak coloring number~$k$.
Accordingly, we say that $L_S$ is \emph{non-extendable}, and otherwise it is \emph{extendable}.
We then solve the turbocharging problem in which we aim to replace the~$c$ leftmost vertices of~$L_S$ with arbitrary free vertices in order to make $L_S$ extendable again.
We do this using a search-tree algorithm analogous to IC-WCOL($r$).
We also use the above turbocharging approach with a heuristic that chooses the next vertex among the free vertices based on the smallest degree. We call this heuristic Degree-Heuristic\footnote{The name is the same as in the left-to-right setting; there will be no confusion between the two because the direction will be clear from the context.}.
\looseness=-1
We tried further heuristic optimizations, mainly for the left-to-right approaches, based on lower bounds (for early search termination), guided branching (towards faster decomposition into trivial instances), and ordered adjacency lists (to speed up computation of weakly reachable sets) but they did not improve the resulting coloring numbers.

\subsection{Experiments setup}
\looseness=-1
\subparagraph*{Computation environment.}
All experiments were performed on a cluster of 20 nodes.
Each node is equipped with two Intel Xeon E5-2640 v4, 2.40GHz 10-core processors and 160 GB RAM.
The optimization for each instance and an algorithm was pinned to a specific core of a cluster node (simultaneous multithreading was disabled).
\looseness=-1
All implementations of heuristics and turbocharging algorithms were done in C\texttt{++}17, and made use of the Boost library\footnote{\url{https://www.boost.org/}}, version 1.77.0.
The code was compiled on Linux with \texttt{g++} version 7.5.0 and with the flags \texttt{-std=c++17 -O2}.
The optimization process (see \emph{application of turbocharging} below) that calls the heuristics combined with the turbocharging algorithms (implemented in C\texttt{++}) was implemented in Python3 and executed with Python 3.7.13.
A memory limit of 16 GB was set (a process only starts if the required memory is free).
The source code is available online \cite{doblalex_2021}.

\subparagraph*{Instances.}
Each instance in our data set is a tuple consisting of a graph~$G$ and a radius~$r \in \mathbb{N}$.
The radii~$r$ are between $2$ and $5$, as also used by Nadara et al.~\cite{DBLP:journals/jea/NadaraPRRS19}.
The graphs $G$ form a subset of the graphs used by Nadara et al.
This enables us to use weak coloring numbers of orderings computed by Nadara et al.\ as a baseline.
Furthermore, we can compare for a heuristic, the improvement achieved by the local search of Nadara et al.\ to the improvement achieved by our turbocharging algorithms.

\looseness=-1
The graphs consist of real-world data, the PACE 2016 Feedback Vertex Set problems, random planar graphs, and random graphs with bounded expansion.
Nadara et al.\ classified the graphs into four classes based on the number of edges --- small (up to 1k edges), medium (up to 10k edges), big (up to 48k edges), and huge.
\cref{table:test-graphs} contains some basic statistics of the graphs. 
For a detailed explanation and references for all input graphs we refer to Nadara et al.~\cite{DBLP:journals/jea/NadaraPRRS19}; the instances are available online\footnote{\url{https://kernelization-experiments.mimuw.edu.pl/}}.
We considered all instances except those where one of the three heuristics of Nadara et al.\ that we also consider did not yield a result (the Degree-Heuristic, Wreach-Heuristic, and Sreach-Heuristic).
That is, they timed out after 300 seconds or ran into a memory limit (16 GB).
In total, our dataset contains 334 instances.

\subparagraph*{Application of turbocharging.}\label{par:optbytc}
\begin{algorithm}[t]
  \KwIn{A graph $G$, an integer $r$, a heuristic $\mathcal{H}$, and a turbocharged heuristic $TC$-$\mathcal{H}$}
  \KwOut{An ordering~$L$ of vertices~$V$}
  $L\gets$ ordering of vertices~$V$ computed by the heuristic $\mathcal{H}$;\\
  $k\gets \wcol_r(G,L)$;\\
  Start a timer; after $t$ seconds abort the program, and \textbf{return} the current value of~$L$\\
  \While{\textsf{\textup{true}}}{
    $c\gets 1$;\\
    \While{\textsf{\textup{true}}}{
      Try to compute an ordering of vertices~$V$ with weak $r$\nobreakdash-coloring number $k-1$ using $TC$-$\mathcal{H}$ with reconstruction parameter~$c$;\\
      If successful, assign this ordering to~$L$, set $k\gets\wcol_r(G,L)$, and \textbf{break};\\
      Otherwise, set $c\gets c+1$;\\
    }
  }
  \caption{Algorithm for iteratively decreasing the weak coloring number by turbocharging.}
  \label[algorithm]{alg:expframework}
\end{algorithm}
Our application of turbocharging with a heuristic $\mathcal{H}$ works as follows.
Given a graph $G$ and a radius $r$, we start with a run of $\mathcal{H}$ without turbocharging to produce a vertex ordering for $G$ with a baseline weak $r$-coloring number~$k$.
We then start a timer that runs for~$t$ seconds, aborting the rest of the algorithm when it terminates.
We then iteratively decrease $k$ and apply~$\mathcal{H}$ together with the turbocharging approach to try and find a vertex ordering for $G$ with weak $r$-coloring number at most~$k$.
In each such try, we start with the reconstruction parameter $c = 1$.
If no ordering with the desired weak $r$-coloring number was produced by the turbocharged heuristic, we increase $c$ by one and try again.
If an ordering with weak $r$-coloring number $k' \leq k$ was produced, we set $k = k' - 1$ and repeat the process.
The precise algorithm is given in \cref{alg:expframework}.

For our experiments we applied all compatible combinations of heuristics and the turbocharged versions to each instance.
Furthermore, we computed orderings for radii ranging from~$2$ to~$5$, motivated by Nadara et al.\ who used the same values.
We ran all experiments twice, once with timeout $t=300s$ and once with $t=3600s$.
Results with $t=300s$ are directly compared with the results of Nadara et al.\ who used 300 seconds as timeout.
Results with $t=3600s$ give us the ability to investigate the potential of turbocharging over longer periods of~time.

\subsection{Results}

\newcommand{\nad}{\textsf{bestNadara}}
\looseness=-1

\medskip
We now show to which extent our turbocharging approaches improve the results of heuristics, compare the achieved weak coloring numbers to the ones of Nadara et al., and provide observations about the performance of turbocharging.

\subparagraph*{Impact of turbocharging.}
Turbocharging has the advantage that investing gradually more time will yield gradually better results --- setting the reconstruction parameter to $c=|V(G)|$, we (in theory) even can provably obtain the optimum.
\cref{fig:absimpovertime} shows the cumulative sum of absolute improvements over time when comparing each turbocharged heuristic with the underlying plain heuristic.
That is, for a specific time~$t$, the cumulative sum of absolute improvements for a turbocharging algorithm $\mathcal{A}$ and a heuristic $\mathcal{H}$ is $\sum_{I\in\text{instances}}(k_{I,\mathcal{H}}-k_{I,\mathcal{A},\mathcal{H}, t})$, where $k_{I,\mathcal{H}}$ is the coloring number achieved by the heuristic and $k_{I,\mathcal{A},\mathcal{H}, t}$ is the coloring number achieved by the turbocharged heuristic after time $t$.
Note that the $y$-axes do not have the same ranges as the underlying heuristics have different performance levels.
\begin{figure}[!t]
  \centering
  \includegraphics[width=\textwidth]{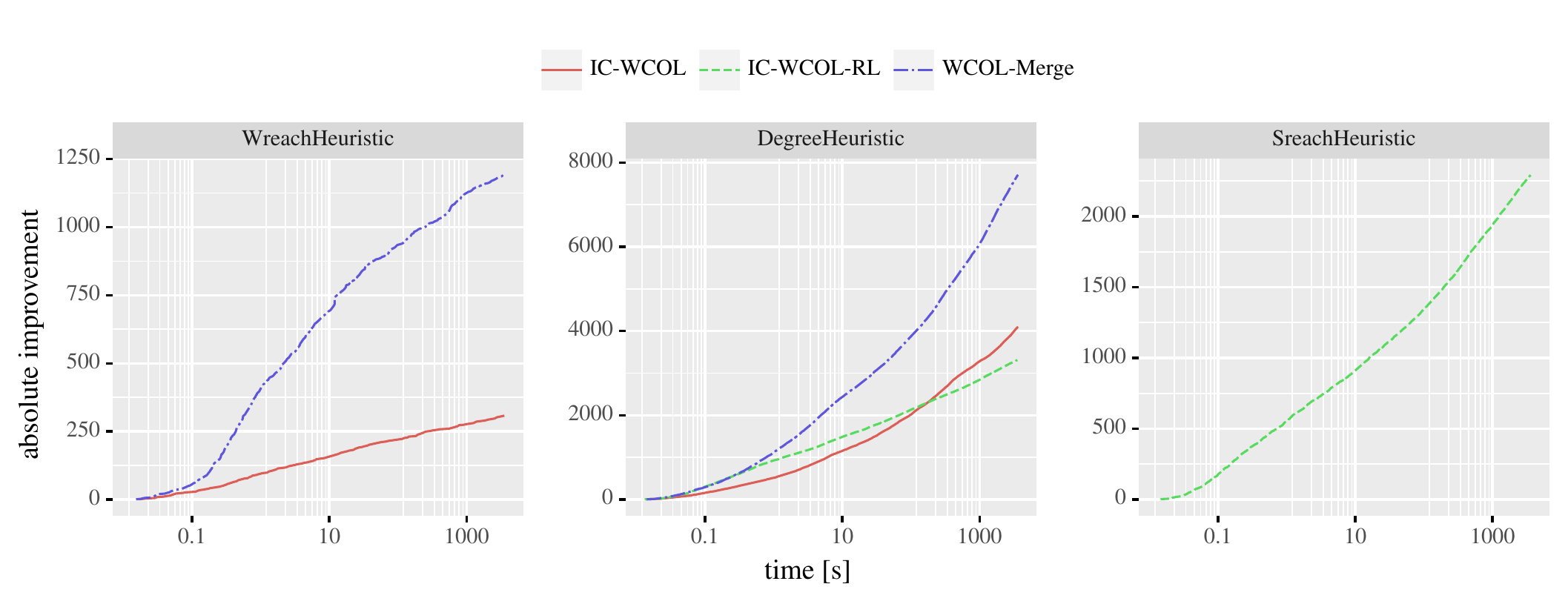}
  \caption{Line plot of the cumulative absolute improvements over time achieved by the turbocharging approaches broken down by the underlying heuristics. The $x$-axis (time) is scaled logarithmically.}
  \label{fig:absimpovertime}
\end{figure}
All plots exhibit logarithmic or similar to logarithmic growth, which means that the gained absolute improvement is approximately logarithmic in the invested time in the most cases.
\TCMERGE\ clearly yields faster and larger improvements than \TCLASTC, and it also supersedes \TCLASTCRL\ for the Degree-Heuristic.
One reason may be that \TCMERGE\ is fixed-parameter tractable and the associated parameters are~small.

\begin{sloppypar}
  We provide further evaluations of the executions of turbocharging algorithms in \cref{appendix:furtheranal}.
  Some observations therein are that the number of applications of turbocharging per run of a heuristic ranges in the order of at most hundreds for \TCLASTC\ and \TCMERGE\ and on average in the single digits; for \TCLASTCRL\ these numbers are one to two orders of magnitude larger.
  Successful applications of a turbocharged heuristic (those where the weak coloring number could be improved) mostly only use very little time and search tree nodes, and have small reconstruction parameters --- mostly $c=1$.
  That is, it is mostly the case that a heuristic wants to place a vertex that is ``suboptimal'', while placing nearly any other vertex will achieve lower weak coloring number.
  The fraction of time spent on turbocharging is also low, which means that most of the time when we can improve the weak coloring number achieved by a heuristic, this can be done easily and in little time.
\end{sloppypar}

\subparagraph{By dataset group and radius.}
Next, we fix a time threshold of 300s (same as Nadara et al.) that we might reasonably invest in practice for computing weak coloring numbers.
We present the improvements in weak coloring numbers gained by turbocharging over the plain heuristics broken down according to the instance group (small, medium, big, huge) and the radius~$r$.
We again provide absolute improvements when comparing with the underlying heuristic, and we also consider the average \emph{relative improvement} of the weak $r$\nobreakdash-coloring number when comparing the turbocharged heuristic to the plain heuristic; that is, the relative improvement is $1-k_{I,\mathcal{A},\mathcal{H},t}/k_{I,\mathcal{H}}$ for $t=300s$.
For each turbocharging algorithm we show results for both turbocharged heuristics.
For \TCLASTC\ and \TCMERGE\ these are the Wreach- and Degree-Heuristic, and for \TCLASTCRL\ these are the Sreach- and Degree-Heuristic.
The results are given in three tables corresponding to the three turbocharging approaches.

\looseness=-1
We furthermore compare the achieved coloring numbers of each approach to the best coloring numbers computed by Nadara et al.:
For each instance $I$, let $\nad(I)$ be the smallest weak $r$\nobreakdash-coloring number of an ordering of vertices of instance $I$ achieved by an approach of Nadara et al.
Note that they implemented seven different heuristics and for each computed ordering they applied a local search to iteratively reduce the weak $r$\nobreakdash-coloring number. %
To evaluate one of our approaches, we take the weak $r$-coloring number $k_{I,\mathcal{A},\mathcal{H},t}$ for instance $I$ obtained by our approach for $t=300s$ and compute the average $1 - k_{I,\mathcal{A},\mathcal{H},t}/\nad(I)$ (in percent) taken over all instances $I$ in the corresponding data set.
We call this value \emph{quality ratio}.
Note that positive values mean that the approach achieves lower weak coloring numbers on average when compared to the best weak coloring numbers achieved by Nadara et al.

\begin{table}[!t]
  \centering
  \caption{\TCLASTC: White columns give relative improvements, light gray columns give quality ratios, dark gray columns give average/maximum absolute improvements. For improvements, we compare to the underlying heuristic without turbocharging and without local search. Time limit: 300\,s.}
  \label{table:tclastcexp}
  \begin{tabularx}{.95\linewidth}{l c >{\centering\arraybackslash\columncolor{lightgray!15}}X >{\centering\arraybackslash\columncolor{darkgray!15}}X >{\centering\arraybackslash}X >{\centering\arraybackslash\columncolor{lightgray!15}}X >{\centering\arraybackslash\columncolor{darkgray!15}}X >{\centering\arraybackslash}X}
    \toprule
    tests                 & $r$ & \multicolumn{3}{c}{Wreach \TCLASTC} & \multicolumn{3}{c}{Degree \TCLASTC}                                                                    \\\midrule
    \multirow{4}*{small}  & 2   & -6.2\%                              & 0.8/5                               & \multirow{4}*{7.2\%} & -7.3\%  & 3.2/14  & \multirow{4}*{18.4\%} \\
                          & 3   & -7.3\%                              & 1.0/6                               &                      & -9.3\%  & 4.3/20  &                       \\
                          & 4   & -11.3\%                             & 1.7/7                               &                      & -11.2\% & 4.2/15  &                       \\
                          & 5   & -15.8\%                             & 1.9/8                               &                      & -16.8\% & 3.9/12  &                       \\
    \midrule
    \multirow{4}*{medium} & 2   & -5.6\%                              & 0.6/2                               & \multirow{4}*{3.7\%} & -7.1\%  & 5.0/14  & \multirow{4}*{17.4\%} \\
                          & 3   & -9.2\%                              & 1.0/11                              &                      & -9.7\%  & 9.0/35  &                       \\
                          & 4   & -11.6\%                             & 0.3/2                               &                      & -15.8\% & 9.9/31  &                       \\
                          & 5   & -16.8\%                             & 1.3/15                              &                      & -20.6\% & 9.9/41  &                       \\
    \midrule
    \multirow{4}*{big}    & 2   & -9.6\%                              & 0.1/1                               & \multirow{4}*{0.7\%} & -21.6\% & 7.2/31  & \multirow{4}*{12.0\%} \\
                          & 3   & -9.5\%                              & 0.4/3                               &                      & -20.8\% & 15.3/47 &                       \\
                          & 4   & -12.7\%                             & 0.3/2                               &                      & -32.5\% & 14.5/42 &                       \\
                          & 5   & -38.3\%                             & 0.2/1                               &                      & -30.4\% & 13.7/38 &                       \\
    \midrule
    \multirow{4}*{huge}   & 2   & -2.0\%                              & 0.1/1                               & \multirow{4}*{0.2\%} & -39.1\% & 2.8/16  & \multirow{4}*{0.9\%}  \\
                          & 3   & -21.4\%                             & 0.1/1                               &                      & -35.0\% & 1.9/6   &                       \\
                          & 4   & -6.9\%                              & 0.0/0                               &                      & -29.9\% & 1.8/5   &                       \\
                          & 5   & -8.9\%                              & 0.3/1                               &                      & -16.5\% & 1.7/4   &                       \\
    \bottomrule
  \end{tabularx}
\end{table}

\looseness=-1
In \cref{table:tclastcexp} we present the performance of the \TCLASTC\ approach.
It is evident that the relative and absolute improvements achieved for the Degree-Heuristic is significantly higher than for the Wreach-Heuristic, although this is partly due to the fact that the Degree-Heuristic achieves worse results than the Wreach-Heuristic before turbocharging.
Relative and absolute improvements decrease for larger instances.

\begin{table}[!t]
  \centering
  \caption{\TCMERGE: White columns give relative improvements, light gray columns give quality ratios, dark gray columns give average/maximum absolute improvements. For improvements, we compare to the underlying heuristic without turbocharging and without local search. Time limit: 300\,s.}
  \label{table:tcmergeexp}
  \begin{tabularx}{.95\linewidth}{l c >{\centering\arraybackslash\columncolor{lightgray!15}}X >{\centering\arraybackslash\columncolor{darkgray!15}}X >{\centering\arraybackslash}X >{\centering\arraybackslash\columncolor{lightgray!15}}X >{\centering\arraybackslash\columncolor{darkgray!15}}X >{\centering\arraybackslash}X}
    \toprule
    tests                 & $r$ & \multicolumn{3}{c}{Wreach \TCMERGE} & \multicolumn{3}{c}{Degree \TCMERGE}                                                                     \\\midrule
    \multirow{4}*{small}  & 2   & -1.0\%                              & 1.4/5                               & \multirow{4}*{19.1\%} & 0.3\%   & 4.1/11  & \multirow{4}*{33.7\%} \\
                          & 3   & 3.3\%                               & 2.8/8                               &                       & 2.9\%   & 6.3/25  &                       \\
                          & 4   & 0.7\%                               & 4.2/11                              &                       & 2.3\%   & 7.2/23  &                       \\
                          & 5   & -1.2\%                              & 5.4/14                              &                       & -0.2\%  & 8.0/19  &                       \\
    \midrule
    \multirow{4}*{medium} & 2   & 2.4\%                               & 1.6/7                               & \multirow{4}*{15.0\%} & 1.1\%   & 6.4/16  & \multirow{4}*{32.7\%} \\
                          & 3   & 0.4\%                               & 2.7/15                              &                       & 0.6\%   & 11.6/46 &                       \\
                          & 4   & -0.2\%                              & 2.9/13                              &                       & -1.9\%  & 13.3/36 &                       \\
                          & 5   & -5.3\%                              & 4.2/16                              &                       & -5.5\%  & 15.5/50 &                       \\
    \midrule
    \multirow{4}*{big}    & 2   & -0.3\%                              & 1.7/7                               & \multirow{4}*{9.4\%}  & -7.5\%  & 11.0/32 & \multirow{4}*{24.2\%} \\
                          & 3   & -1.8\%                              & 2.1/9                               &                       & -8.1\%  & 23.4/65 &                       \\
                          & 4   & -4.2\%                              & 2.8/11                              &                       & -19.3\% & 26.6/63 &                       \\
                          & 5   & -27.2\%                             & 4.9/26                              &                       & -20.5\% & 26.6/67 &                       \\
    \midrule
    \multirow{4}*{huge}   & 2   & -0.6\%                              & 1.7/5                               & \multirow{4}*{1.0\%}  & -14.9\% & 28.4/84 & \multirow{4}*{12.2\%} \\
                          & 3   & -20.5\%                             & 1.8/5                               &                       & -21.9\% & 27.4/49 &                       \\
                          & 4   & -6.7\%                              & 0.5/2                               &                       & -27.6\% & 11.2/20 &                       \\
                          & 5   & -8.7\%                              & 1.0/2                               &                       & -15.5\% & 6.0/14  &                       \\
    \bottomrule
  \end{tabularx}
\end{table}%

\looseness=-1
\cref{table:tcmergeexp} contains the results for \TCMERGE.
Here, turbocharging achieves positive quality ratios for some instance classes and radii.
The relative and absolute improvements are much larger than for \TCLASTC, especially for the huge instances and the Degree-Heuristic.
It is also interesting that while the Degree-Heuristic generally computes orderings of higher weak coloring number than the Wreach-Heuristic, the turbocharged version of the Degree-Heuristic computes orderings of similar or even lower weak coloring numbers than the turbocharged version of the Wreach-Heuristic for the small and medium instances.
We do not see an obvious reason for that, but it could again indicate the power of the fixed-parameter algorithm.

\begin{table}[!t]
  \centering
  \caption{\TCLASTCRL: White columns give relative improvements, light gray columns give quality ratios, dark gray columns give average/maximum absolute improvements. For improvements, we compare to the underlying heuristic without turbocharging and without local search. Time limit: 300\,s.}
  \label{table:exptclastcrl}
  \begin{tabularx}{.95\linewidth}{l c >{\centering\arraybackslash\columncolor{lightgray!15}}X >{\centering\arraybackslash\columncolor{darkgray!15}}X >{\centering\arraybackslash}X >{\centering\arraybackslash\columncolor{lightgray!15}}X >{\centering\arraybackslash\columncolor{darkgray!15}}X >{\centering\arraybackslash}X}
    \toprule
    tests                 & $r$ & \multicolumn{3}{c}{Sreach \TCLASTCRL} & \multicolumn{3}{c}{Degree \TCLASTCRL}                                                                     \\\midrule
    \multirow{4}*{small}  & 2   & -2.1\%                                & 3.2/28                                & \multirow{4}*{15.9\%} & -7.5\%  & 3.1/11  & \multirow{4}*{22.0\%} \\
                          & 3   & 1.6\%                                 & 2.7/10                                &                       & -7.4\%  & 4.4/19  &                       \\
                          & 4   & 3.6\%                                 & 2.5/7                                 &                       & -7.9\%  & 5.2/21  &                       \\
                          & 5   & 3.3\%                                 & 3.2/8                                 &                       & -8.9\%  & 6.3/26  &                       \\
    \midrule
    \multirow{4}*{medium} & 2   & -8.8\%                                & 3.1/12                                & \multirow{4}*{10.7\%} & -14.8\% & 3.0/8   & \multirow{4}*{15.5\%} \\
                          & 3   & -3.8\%                                & 3.8/16                                &                       & -12.6\% & 5.7/25  &                       \\
                          & 4   & -0.6\%                                & 3.8/14                                &                       & -16.4\% & 5.8/17  &                       \\
                          & 5   & 1.4\%                                 & 4.3/11                                &                       & -18.5\% & 7.6/22  &                       \\
    \midrule
    \multirow{4}*{big}    & 2   & -14.8\%                               & 3.0/10                                & \multirow{4}*{6.4\%}  & -29.8\% & 4.2/12  & \multirow{4}*{10.1\%} \\
                          & 3   & -13.4\%                               & 7.1/26                                &                       & -25.8\% & 8.8/23  &                       \\
                          & 4   & -7.7\%                                & 7.4/19                                &                       & -28.2\% & 13.1/60 &                       \\
                          & 5   & -3.1\%                                & 6.5/19                                &                       & -28.2\% & 12.2/31 &                       \\
    \midrule
    \multirow{4}*{huge}   & 2   & -22.8\%                               & 14.4/47                               & \multirow{4}*{5.4\%}  & -26.3\% & 22.1/76 & \multirow{4}*{7.1\%}  \\
                          & 3   & -16.2\%                               & 11.5/21                               &                       & -29.6\% & 12.4/20 &                       \\
                          & 4   & -17.2\%                               & 5.5/13                                &                       & -27.4\% & 8.2/16  &                       \\
                          & 5   & 2.0\%                                 & 4.3/12                                &                       & -14.4\% & 4.0/10  &                       \\
    \bottomrule
  \end{tabularx}
\end{table}

\cref{table:exptclastcrl} contains the results for \TCLASTCRL.
Although the relative and absolute improvements of turbocharging the Degree-Heuristic are slightly higher, the quality ratios for the turbocharged version of the Sreach-Heuristic are significantly better.
This could imply that \TCLASTCRL\ struggles to turbocharge slightly worse heuristics such as the Degree-Heuristic.
Furthermore, we see that for the Sreach-Heuristic the quality ratios are better for larger radii.
The reason for this could be that the Sreach-Heuristic performs well for larger radii even before turbocharging.
We also notice that for the medium graph class the quality ratios get worse.
The reason is that the implementation of \TCLASTCRL\ is slightly more computationally expensive than for the other approaches.

\begin{figure}[!t]
  \includegraphics[width=\textwidth]{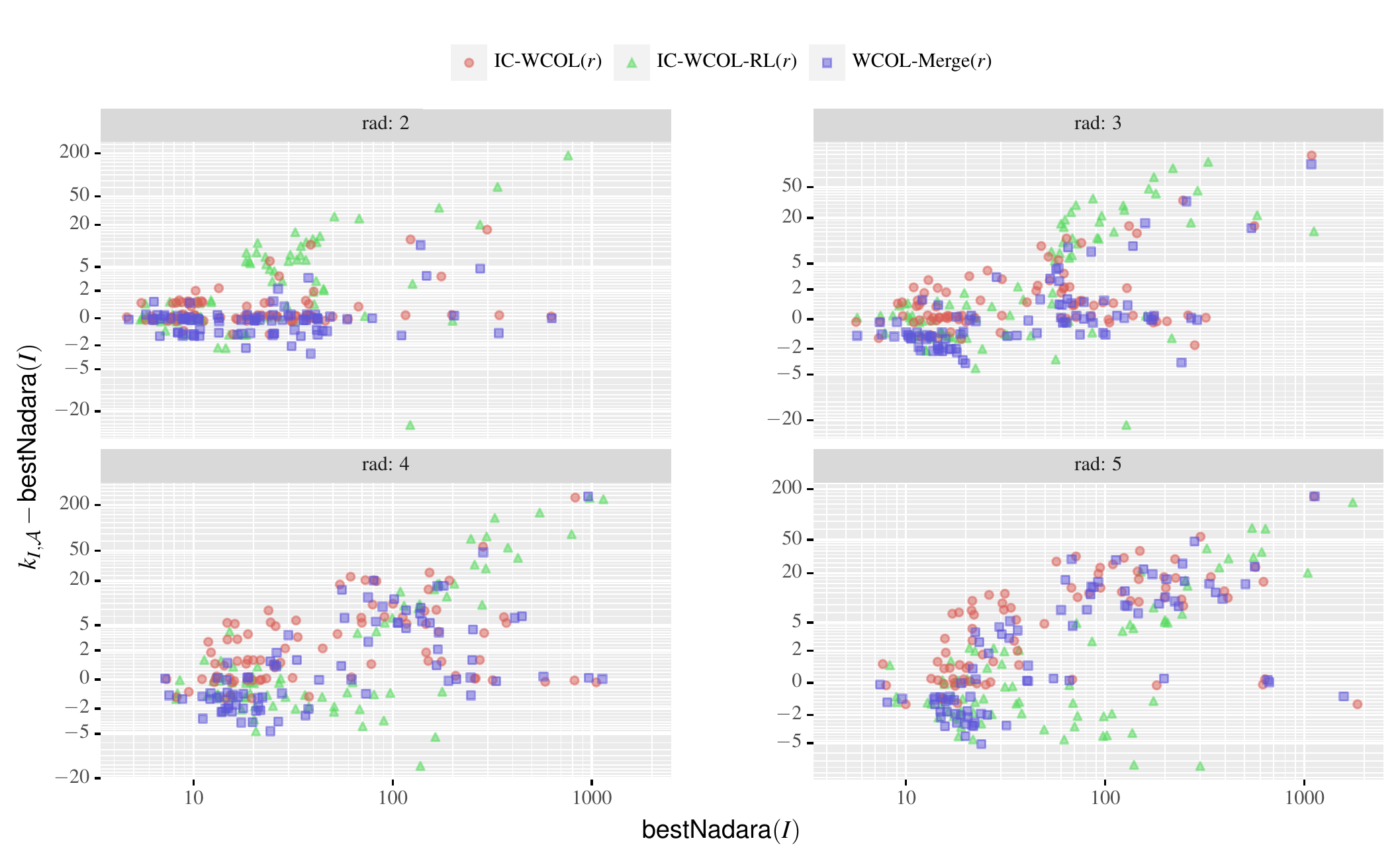}
  \caption[CaptionScatter]{Scatter plot of the results for turbocharging algorithms broken down by radius. Each data point corresponds to an instance and a turbocharging algorithm. The $x$-value is the best weak coloring number achieved by Nadara et al.\ for this instance, the $y$-value is the difference of coloring numbers between the best weak coloring number achieved by Nadara et al.\ and the weak coloring number achieved by the turbocharging algorithm (minimum over both turbocharged heuristics).
    The $x$-axis is scaled logarithmically and the $y$-axis is scaled pseudo-logarithmically\footnotemark. Time limit: 300\,s.}
  \label{fig:results}
\end{figure}

\subparagraph*{Achieved coloring numbers in comparison to Nadara et al.}
\cref{fig:results} illustrates the distribution of results for all turbocharging algorithms in a scatter plot.
Data points below $y$-value zero mean that the turbocharging algorithm improves a bound on the weak $r$\nobreakdash-coloring number of an instance.
Among our approaches, we see that while \TCMERGE\ performs well for small radii, \TCLASTCRL\ performs well for larger radii.
Together with the analysis from above we can conclude that Nadara et al.'s approaches yield lower weak coloring numbers on average, but there is fraction of roughly one half instances where our approaches supersede Nadara et al.'s, in particular if the computed weak coloring numbers are small.
Overall, we improved bounds for 172 of the 334 considered instances after $t=300s$.
The resulting new bounds are lower by 5\% on average over all instances.
Follow-up investigations also showed that the relative improvement of turbocharging negatively correlates with the average degree of the graph, suggesting that our turbocharging algorithms work better for particularly sparse graphs. We refer to \cref{fig:avgdegrelimp,fig:avgdegimpnad} in \cref{appendix:furtheranal} that illustrate this correlation for the average improvement over underlying heuristics and the relative improvements over the algorithms of Nadara et al.
We also tested if the relative improvement depends on the number of vertices, number of edges, or the average clustering coefficient \cite{wattsCollectiveDynamicsSmallworld1998}.
We were not able to observe a significant correlation for any of those parameters and the relative improvements.

\footnotetext{see \url{https://scales.r-lib.org/reference/pseudo_log_trans.html} for an explanation.}%
\begin{figure}[!t]
  \centering
  \includegraphics[width=.5\textwidth]{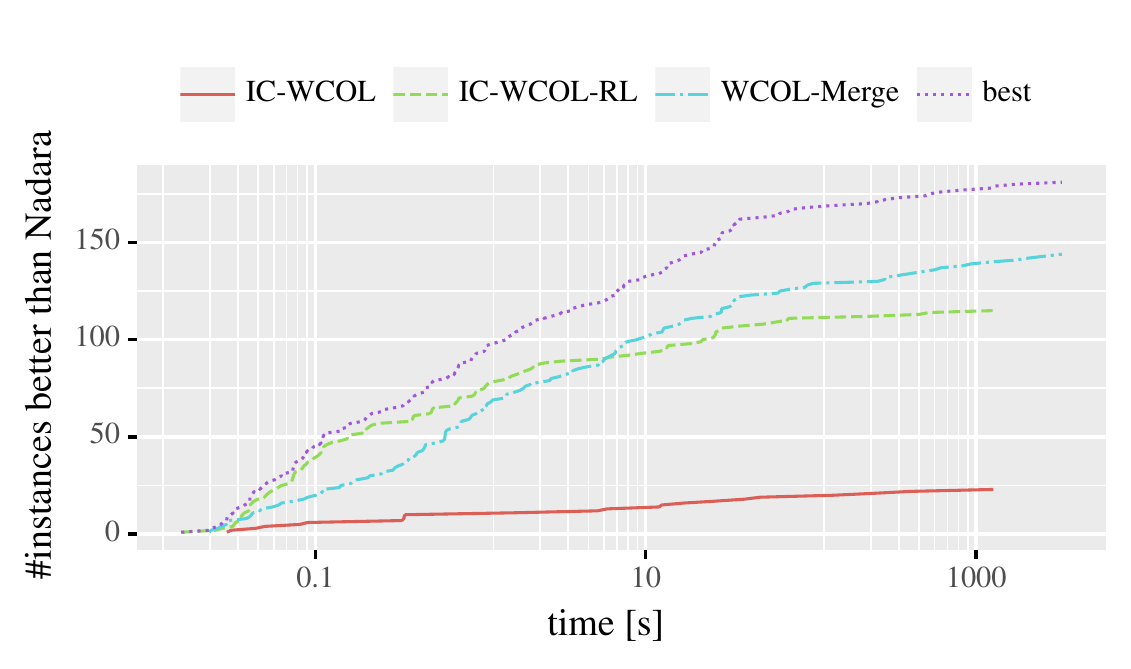}
  \caption{Line plot of the number of instances where a turbocharging approach achieves better coloring numbers than Nadara et al.\ over time. The time is scaled logarithmically.}
  \label{fig:absbetterthannad}
\end{figure}

As mentioned, our approach has the advantage that investing more time yields gradually better results.
\cref{fig:absbetterthannad} shows the number of instances for which a turbocharging approach improve weak coloring numbers compared to Nadara et al.\ after a specific time.
That is, for a time $t$, the $y$-value of a line corresponds to the number of instances~$I$ such that $k_{I,\mathcal{A},\mathcal{H}, t}<\nad(I)$.
The values for the line \emph{best} are determined by taking the number of instances $I$ where any of the turbocharging approaches is better than $\nad(I)$.
Interestingly, \TCLASTCRL\ starts off with more improved instances, however, after 3600 seconds, \TCMERGE\ achieved more instances with smaller coloring numbers than Nadara et al.
After 3600 seconds, \TCLASTC\ was able to improve 24 instances compared to Nadara et al., \TCMERGE\ 144 instances, and \TCLASTCRL\ 115 instances.
Overall, we could improve upper bounds for 181 of the 334 instances with $t=3600s$.
These are nine more than for $t=300s$.%

\section{Conclusion}
\looseness=-1
On the theoretical side, we determined obstructions (running-time lower bounds) and promising avenues (a fixed-parameter algorithm) for applying the turbocharging framework to computing vertex orderings of small weak coloring numbers.
On the experimental side, on a diverse set of instances each of the turbocharging approaches we use yields large improvements over the plain heuristics.
This is most pronounced for the fixed-parameter turbocharging \TCMERGE.
Then we compared turbocharging to the best results gained by the seven heuristics that Nadara et al.~\cite{DBLP:journals/jea/NadaraPRRS19} employed together with local search procedures.
Turbocharging so far yields on average larger weak coloring numbers than the best of the previous approaches.
However, for 173 of the in total 334 instances, turbocharging outperforms all of the previous approaches combined.
It works particularly well for small computed coloring numbers and sparse input instances.

\printbibliography

\clearpage

\appendix

\section{Appendix}
\subsection{Omitted theoretical results}\label{appendix:theory}
\begin{lemma}
  \label{lemma:nopaththrougS}
  Consider an extendable subordering~$L_S$ with free vertices~$T$.
  Assume that there is a full right extension $L^\prime$ of~$L_S$ with $\wreach_r(G,L^\prime)\le k$.
  For each vertex $u\in T$, let $X_{L^\prime}(u)=\{v\in T\mid v\prec_{L^\prime} u\}$ be
  the vertices that are placed between~$S$ and~$u$ w.r.t.\ $L^\prime$.
  If $\wreach_r(G,L^\prime,u)\cap X_{L^\prime}(u)=\emptyset$, then each path of length at most~$r$ between~$u$ and a vertex in $X_{L^\prime}(u)$ goes through~$S$.
\end{lemma}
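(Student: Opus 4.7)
The plan is to prove the contrapositive: assume that there exists a path $P$ of length at most $r$ between $u$ and some $v \in X_{L'}(u)$ with $V(P) \cap S = \emptyset$ (so $V(P) \subseteq T$), and derive that $\wreach_r(G, L', u) \cap X_{L'}(u) \neq \emptyset$. The whole argument hinges on inspecting the leftmost vertex of $P$ with respect to $L'$, so that is the single structural step to set up.

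First I would fix $w$ to be the leftmost vertex of $V(P)$ w.r.t.\ $L'$. Since $v \in X_{L'}(u)$ means $v \prec_{L'} u$ and both $u,v \in V(P)$, this immediately gives $w \preceq_{L'} v \prec_{L'} u$. I then split into two cases. In the case $w = v$, the path $P$ itself witnesses $v \in \wreach_r(G, L', u)$: it has length at most $r$, its endpoints are $u$ and $v$, and by choice of $w$ no vertex of $P$ lies strictly left of $v$. Combined with $v \in X_{L'}(u)$, this is the desired contradiction.

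In the case $w \prec_{L'} v$, the vertex $w$ is an internal vertex of $P$ lying in $T$ (because $V(P) \subseteq T$) and satisfies $w \prec_{L'} u$, hence $w \in X_{L'}(u)$. The subpath of $P$ from $u$ to $w$ has length at most $r$ and, by the choice of $w$ as leftmost on $P$, all its vertices are $\succeq_{L'} w$. Therefore $w \in \wreach_r(G, L', u)$, giving $w \in \wreach_r(G, L', u) \cap X_{L'}(u)$, again a contradiction.

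I do not expect a genuine obstacle here; the only point that needs a careful check is that in the second case $w$ really is distinct from both $u$ and $v$ and lies in $T$, which is forced by $w \prec_{L'} v \prec_{L'} u$ together with the assumption $V(P) \cap S = \emptyset$. Note that the proof does not use the hypothesis $\wcol_r(G, L') \le k$; it is a purely combinatorial fact about the ordering and the weakly $r$-reachable sets as recalled in \cref{obs:wreach-grows}.
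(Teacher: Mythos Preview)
Your proof is correct and follows essentially the same approach as the paper: both argue by contradiction, take a path $P$ of length at most $r$ from $u$ to some $v\in X_{L'}(u)$ avoiding $S$, pick the leftmost vertex of $P$ with respect to $L'$, and observe that this vertex lies in $\wreach_r(G,L',u)\cap X_{L'}(u)$. The paper simply treats your two cases uniformly (it does not distinguish whether the leftmost vertex equals $v$ or is strictly left of $v$), but this is a presentational difference only; your observation that the hypothesis $\wcol_r(G,L')\le k$ is unused is also correct.
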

\begin{proof}
  Assume to the contrary that there is a path $P$ of length at most~$r$ between~$u$ and a vertex in $X_{L^\prime}(u)$ that does not go through~$S$.
  Consider the leftmost (w.r.t.~$L^\prime$) vertex~$x$ on this path.
  Since there is a path from~$u$ to $x$ of length at most~$r$ that does not go left of $x$, $x$ has to be in $\wreach_r(G,L^\prime,u)\cap X_{L^\prime}(u)$, yielding a contraction.
\end{proof}
\begin{proposition}\label{prop:fullverticesnext}
  \sloppy
  Consider an extendable subordering~$L_S$ with free vertices~$T$.
  Assume that there is a full right extension $L^\prime$ of~$L_S$ with $\wreach_r(G,L^\prime)\le k$.
  If for $u\in T$, $|\wreach_r(G,L_S,u)|=k$, then there is another full right extension~$\overline{L}$ of~$L$ where~$u$ is the leftmost vertex of $T$ w.r.t.\ $\overline{L}$ and $\wreach_r(G,\overline{L})\le k$.
\end{proposition}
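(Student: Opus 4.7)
\medskip
\noindent\textbf{Proof proposal.} The plan is to take $\overline{L}$ to be exactly $L'$ with $u$ pulled out and reinserted at the position immediately after $L_S$ (so $u$ becomes the leftmost vertex of $T$ and the rest of $T$ keeps its $L'$-order). I will then argue the stronger statement $\wreach_r(G,\overline{L},v)\subseteq \wreach_r(G,L',v)$ for every $v\in V(G)$, which gives $\wcol_r(G,\overline{L})\le \wcol_r(G,L')\le k$. First, I would note that by \cref{obs:wreach-grows} (applied to $L_S$ and the full right extension $L'$) $\wreach_r(G,L_S,u)\subseteq \wreach_r(G,L',u)$; since the left-hand set has size $k$ and the right-hand set has size at most $k$, the two sets coincide. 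In particular $\wreach_r(G,L',u)\subseteq S\cup\{u\}$, so $\wreach_r(G,L',u)\cap X_{L'}(u)=\emptyset$, which makes \cref{lemma:nopaththrougS} applicable to $u$ in $L'$.

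Next I would verify the inclusion by a case split on $v$. If $v\in S$, then $\overline{L}$ and $L'$ both have $S$ as a prefix in the same order, and for any $w\in \wreach_r(G,\overline{L},v)$ we must have $w\in S$ and the witnessing path only constrains $\preceq$ inside $S$, so $w\in \wreach_r(G,L',v)$. If $v=u$, then any $w\in\wreach_r(G,\overline{L},u)$ with $w\neq u$ must be in $S$ (since all vertices of $T\setminus\{u\}$ are placed right of $u$ in $\overline{L}$), and using that the relative order on $S$ in $\overline{L}$ equals $L_S$, the condition coincides with membership in $\wreach_r(G,L_S,u)=\wreach_r(G,L',u)$. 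For $v\in T\setminus\{u\}$ and $w\in \wreach_r(G,\overline{L},v)\setminus\{u\}$, the witnessing path $P$ cannot use any vertex left of $w$ in $\overline{L}$; in particular when $w\in S$ it behaves as in the first case, and when $w\in T\setminus\{u,v\}$ the relation $w\preceq_{\overline{L}}u$ fails so $P$ avoids both $S$ and $u$, hence $P\subseteq G[T\setminus\{u\}]$, on which $\overline{L}$ and $L'$ agree, giving $w\in\wreach_r(G,L',v)$.

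The crucial case is $w=u$, $v\in T\setminus\{u\}$. Here the condition $u\preceq_{\overline{L}} x$ for every $x\in V(P)$ forces $V(P)\subseteq T$, i.e.\ $P$ is a path in $G[T]$ of length at most~$r$ from~$u$ to~$v$. Now apply \cref{lemma:nopaththrougS} to $L'$ and $u$: every path of length at most $r$ between $u$ and a vertex of $X_{L'}(u)$ goes through $S$. For any subpath of $P$ from $u$ to an intermediate vertex $p_i$ (or to $v$ itself), this subpath is also of length at most $r$ and lies in $G[T]$, so it cannot land in $X_{L'}(u)$; hence every vertex of $P$ other than $u$ is $\succeq_{L'} u$. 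This means $u$ is already leftmost on $P$ with respect to $L'$, so $u\in \wreach_r(G,L',v)$.

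The main obstacle is precisely this last case: naively moving $u$ leftward could create new weakly $r$-reachable witnesses $u$ for some $v\in T\setminus\{u\}$, pushing $|\wreach_r(G,\overline{L},v)|$ above $k$. Resolving it requires using the \emph{tight} hypothesis $|\wreach_r(G,L_S,u)|=k$ (to force $\wreach_r(G,L',u)\subseteq S\cup\{u\}$) together with \cref{lemma:nopaththrougS} to show that any path in $G[T]$ from $u$ is automatically $L'$-monotone from $u$. Once this is settled, summing over $v$ yields $\wcol_r(G,\overline{L})\le k$, completing the proof.
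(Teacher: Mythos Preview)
Your proposal is correct and follows essentially the same approach as the paper: you build $\overline{L}$ by pulling $u$ to the front of $T$, use the tight hypothesis $|\wreach_r(G,L_S,u)|=k$ to conclude $\wreach_r(G,L',u)\cap X_{L'}(u)=\emptyset$, and then invoke \cref{lemma:nopaththrougS} to rule out the only dangerous case where $u$ could become newly weakly $r$-reachable from some $v\in T\setminus\{u\}$. Your write-up is in fact more careful than the paper's, which asserts without justification that moving $u$ left can only change reachable sets by adding $u$; your explicit case split on $v$ and $w$ verifies this cleanly.
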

\begin{proof}
  \sloppy
  We construct $\overline{L}$ from $L^\prime$.
  Assume that $L^\prime=(s_1,\dots, s_n,t_1,\dots, t_m)$ with $S=\{s_1,\dots ,s_n\}$ and $T=\{t_1,\dots, t_m\}$;
  furthermore, assume $u=t_i$ with $1\le i\le m$.
  We set $\overline{L}=(s_1,\dots, s_n,u,t_1,\dots, t_{i-1},t_{i+1},\dots, t_m)$.
  The ordering $\overline{L}$ has the required properties because of \cref{lemma:nopaththrougS}:
  Assuming that there exists a vertex~$v$ such that $ | \wreach_r(G,\overline{L},v) |>| \wreach_r(G,L,v) |$, we will reach a contradiction;
  by shifting~$u$ left, we only increase weakly $r$\nobreakdash-reachable sets of vertices $t_j,j\ne i$, by adding~$u$ to their weakly $r$\nobreakdash-reachable sets.
  If~$u$ is weakly $r$\nobreakdash-reachable from a vertex $t_j$ with respect to $\overline{L}$, and it was not weakly $r$\nobreakdash-reachable with respect to $L^\prime$, then this new path goes through a vertex in $X_{L^\prime}(u)$ as defined in \cref{lemma:nopaththrougS} as it cannot go left of~$u$.
  Whenever $|\wreach_r(G,L_S,u)|=k$, then $X_{L^\prime}(u)\cap \wreach_r(G,L^\prime,u)=\emptyset$.
  Thus, every path of length at most~$r$ from~$u$ to a vertex in $X_{L^\prime}(u)$ must go through~$S$.
  But since all vertices in~$S$ are placed left of~$u$ with respect to $\overline{L}$,~$u$ cannot be reached from $t_j$, leading to a contradiction.
\end{proof}

\newcommand{\wreachright}{\mathrm{PotSreach}}
\begin{lemma}
  \label{lem:leftextfixed}
  Let $\wreachright_r(G,L_S,v)$ be the set of potentially strongly $r$-reachable vertices of $v\in S$ w.r.t.\ a subordering $L_S$.
  Let~$L_S$ be a subordering with free vertices $T$ and let $L_{S^\prime}$ be a \emph{left extension} of $L_S$, that is, $S^\prime\supseteq S$, $L_{S^\prime}[S]=L_S$, and $u\preceq_{L_{S^\prime}}v$ for $u\in S^\prime\setminus S$ and $v\in S$. Then for all $v\in S$ $\wreachright_r(G,L,v)\subseteq \wreachright_r(G,L^\prime,v)$.
\end{lemma}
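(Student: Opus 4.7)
The plan is to fix $v \in S$ and $u \in \wreachright_r(G, L_S, v)$ and show that $u$ still witnesses potential strong $r$-reachability under $L_{S'}$, splitting on which clause of the definition is satisfied. Let $T = V(G) \setminus S$ and $T' = V(G) \setminus S'$ denote the free vertices, and note that $T' \subseteq T$ and the new vertices added on the left form $S' \setminus S = T \setminus T'$.

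First, suppose $u \in \wreach_r(G[S], L_S, v)$. Then there is a path $P$ of length at most $r$ in $G[S]$ from $v$ to $u$ with $u \preceq_{L_S} w$ for every $w \in V(P)$. Since $S \subseteq S'$, $P$ is still a path in $G[S']$, and since $L_{S'}[S] = L_S$, the relative order of vertices of $S$ is preserved. In particular $u \preceq_{L_{S'}} w$ for every $w \in V(P) \subseteq S$. Hence $u \in \wreach_r(G[S'], L_{S'}, v) \subseteq \wreachright_r(G, L_{S'}, v)$.

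Second, suppose there is a path $P$ of length at most $r$ in $G$ from $v$ to $u$ with $V(P) \cap T = \{u\}$, i.e., $V(P) \setminus \{u\} \subseteq S$. I split on whether $u$ has become ordered in $L_{S'}$. If $u \in T'$, then because $V(P) \setminus \{u\} \subseteq S \subseteq S'$ we immediately get $V(P) \cap T' = \{u\}$, so the very same path witnesses $u \in \wreachright_r(G, L_{S'}, v)$. If instead $u \in S' \setminus S$, then $V(P) \subseteq S \cup \{u\} \subseteq S'$, so $P$ is a path in $G[S']$; moreover the left-extension assumption gives $u \preceq_{L_{S'}} w$ for every $w \in S$, and trivially $u \preceq_{L_{S'}} u$, so $u \preceq_{L_{S'}} w$ for all $w \in V(P)$, yielding $u \in \wreach_r(G[S'], L_{S'}, v) \subseteq \wreachright_r(G, L_{S'}, v)$.

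These cases exhaust the definition, completing the proof. The argument is essentially a careful unfolding of the two disjuncts in the definition of potentially strongly $r$-reachable together with the left-extension property, so there is no real obstacle; the only thing to be a bit careful about is the subcase $u \in S' \setminus S$ of the second case, where the witnessing path actually changes category (from using a free vertex to lying entirely in $G[S']$) and one must verify that the leftmost-vertex condition is still met — which is precisely what the left-extension hypothesis $u \preceq_{L_{S'}} w$ for $u \in S' \setminus S, w \in S$ was designed to guarantee.
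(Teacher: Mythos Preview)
Your proof is correct and follows essentially the same approach as the paper: both split on the two clauses of the definition of potentially strongly $r$-reachable, and in the second clause further split on whether $u$ remains free or has been absorbed into $S'$. Your write-up is in fact slightly more explicit than the paper's in verifying the leftmost-vertex condition in each subcase.
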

\begin{proof}
  Let $v\in S$ and let $u\in \wreachright_r(G,L_S,v)$.
  We show that $u\in\linebreak \wreachright_r(G,L_{S^\prime},v)$.
  There are two cases:
  \begin{itemize}
    \item Case 1: $u\in\wreach_r(G[S],L_S,v)$.
          We know that $G[S^\prime]\supset G[S]$ and $L_{S^\prime}[S]=L_S$, hence $u\in \wreach_r(G[S^\prime],L^\prime, v)$.
    \item Case 2: There exists a path $P$ of length at most~$r$ from~$v$ to~$u$ in~$G$ such that $V(P)\cap T=\{u\}$.
          Let $T^\prime=V(G)\setminus S^\prime$ be the free vertices with respect to $L_{S^\prime}$.
          If $u\in T^\prime$, then $V(P)\cap T^\prime=\{u\}$, and $u\in\wreachright_r(G,L_{S^\prime},v)$ because of the same path $P$.
          If $u\in S^\prime$, then notice that $u\in \wreach(G[S^\prime],L_{S^\prime},v)$ because of $P$, as this path certainly does not go left of~$u$ w.r.t.\ $L_{S^\prime}$ as $L_{S^\prime}$ is a left extension of~$L$.
  \end{itemize}
  In both cases $u\in\wreachright_r(G,L^\prime,v)$, which concludes the proof.
\end{proof}

\begin{table}[!t]
  \centering
  \caption{Basic statistics of test graphs.}
  \label{table:test-graphs}
  \begin{tabularx}{\linewidth}{l X X X X | X X X X }
    \toprule
    & \multicolumn{4}{c}{Vertex count} & \multicolumn{4}{c}{Edge count}\\
    \midrule
    Group & min & med & avg & max & min & med & avg & max   \\
    \midrule
    Small & 34 & 115 & 222.52 & 620 & 62 & 612 & 520.61 & 930 \\
    Medium & 235 & 1,302 & 1,448.44 & 4,941 & 1,017 & 3,032 & 3,343.44 & 8,581 \\
    Big & 1,224 & 7,610 & 7,963.64 & 16,264 & 10,445 & 21,000 & 19,519.00 & 47,594 \\
    Huge & 3,656 & 27,775 & 34,598.69 & 77,360 & 48,130 & 186,940 & 237,300.06 & 546,487 \\
    \bottomrule
  \end{tabularx}
\end{table}

\subsection{Further experimental analysis}\label{appendix:furtheranal}
\begin{table}[!t]
  \centering
  \caption{Detailed analysis of the algorithm executions.}
  \label{table:analysis}
  \begin{tabularx}{\linewidth}{l X X X X | X X X X | X X X X }
    \toprule
                            & \multicolumn{4}{c}{\TCLASTC} & \multicolumn{4}{c}{\TCMERGE} & \multicolumn{4}{c}{\TCLASTCRL}                                                          \\\midrule
                            & min                          & max                          & avg                            & med & min & max & avg  & med & min & max  & avg  & med \\\midrule
    $c$ Imp.                & 1                            & 15                           & 1.7                            & 1   & 1   & 9   & 1.2  & 1   & 1   & 14   & 1.6  & 1   \\
    $c$ nImp.               & 1                            & 17                           & 4.7                            & 4   & 1   & 460 & 61.4 & 5   & 1   & 14   & 4.7  & 4   \\
    $t_{TC}$/$t$ Imp.       & 0                            & 1                            & 0                              & 0   & 0   & 1   & 0.4  & 0   & 0   & 1    & 0.5  & 0.5 \\
    $t_{TC}$/$t$ nImp.      & 0                            & 1                            & 0.9                            & 1   & 0.9 & 1   & 1    & 1   & 0   & 1    & 0.9  & 0.9 \\
    \#nodes Imp.            & 1                            & 1e8                          & 157                            & 4   & 2   & 4e7 & 1e4  & 16  & 1   & 1e7  & 168  & 2   \\
    \#nodes nImp.           & 1                            & 1e8                          & 2e3                            & 6   & 2   & 3e7 & 1e5  & 694 & 1   & 3e7  & 1e4  & 10  \\
    $\text{cnt}_{TC}$ Imp.  & 0                            & 200                          & 1.4                            & 0   & 0   & 578 & 3.4  & 0   & 1   & 6327 & 167  & 73  \\
    $\text{cnt}_{TC}$ nImp. & 1                            & 2e4                          & 833                            & 224 & 1   & 624 & 6.5  & 2   & 1   & 8304 & 87.8 & 23  \\
    depth/$c$ Imp.          & 0.1                          & 1                            & 0.3                            & 0.2 & 0.1 & 1   & 0.5  & 0.4 & 0.1 & 1    & 0.4  & 0.5 \\
    depth/$c$ nImp.         & 0.1                          & 1                            & 0.2                            & 0.2 & 0.1 & 1   & 0.3  & 0.3 & 0.1 & 1    & 0.2  & 0.1 \\
    \bottomrule
  \end{tabularx}
\end{table}
\cref{table:analysis} gives detailed descriptions of some measures for executions of the turbocharging algorithms.
The rows come in pairs and describe runs of a turbocharging algorithm or a turbocharged heuristic when turbocharging was successful (Imp.) or when turbocharging was not successful (nImp.).
The table shows the following measures:
\begin{itemize}
  \item $c$: The reconstruction parameter.
  \item $t_{TC}$/$t$: The fraction of time that was used for turbocharging during an application of a turbocharged heuristic.
  \item \#nodes: The number of search tree nodes for an application of a turbocharging algorithm.
  \item $\text{cnt}_{TC}$: The number of times turbocharging was applied during an application of a turbocharged heuristic.
  \item depth/$c$: The ratio of the depth at which a search tree algorithm was cut off and the reconstruction parameter $c$. Note that the maximum depth for the search tree is always $c$.
\end{itemize}
We gathered these values during the executions of our experiments and \cref{table:analysis} reports some statistics on these values broken down by the turbocharging algorithms.

Most notably, we can see that successful turbocharging algorithm applications mostly have small reconstruction parameters $c$, the median being one for all turbocharging approaches.
On the other hand, unsuccessful executions of turbocharging algorithms have larger reconstruction parameters as has to be expected.
It is also interesting that the maximum value of $c$ for \TCMERGE\ is 460, which seems rather high for the depth of a search tree algorithm.
But this is because sets $S_2$ were chosen for merging, such that a lower bound immediately told us that we cannot find an extendable ordering during this application of \TCMERGE, without even having to enter the search tree.

Furthermore, successful runs of turbocharging algorithms explore surprisingly little search tree nodes, as shown by the low average and median values.
Also, \TCLASTCRL\ only requires 2 search tree nodes in most of the cases, which means that it is mostly the case, that a heuristic wants to place a vertex that is ``suboptimal'', while placing nearly any other vertex will achieve lower weak coloring number.
The fraction of time spent on turbocharging is also really low, which means that whenever we can improve the weak coloring number achieved by a heuristic, this can be done easily and in little time.
The same values for unsuccessful runs of turbocharging algorithms look quite different, with far more explored search tree nodes and ratios of time occupied by turbocharging nearly equal to one.

It is also worth mentioning that the median of $\text{cnt}_{TC}$ is zero for \TCLASTC\ and \TCMERGE\ for the successful applications of turbocharged heuristics.
The explanation is that our optimization of the Degree-Heuristic for the left-to-right approaches that places vertices $v$ with $\wreach_r(G,L_s)=k$ immediately to the right of $L_S$, so that an ordering $L$ with $\wreach_r(G,L)=k$ can be computed without even having to turbocharge once.
Of course, this is not always possible.
Furthermore, this also skews the values of reconstruction parameters on successful turbocharging applications.

The values depth/$c$ show that our applied lower bounds often result in less search space that has to be explored due to cutting off of search tree nodes.

\begin{figure}[!t]
  \centering
  \includegraphics[width=\textwidth]{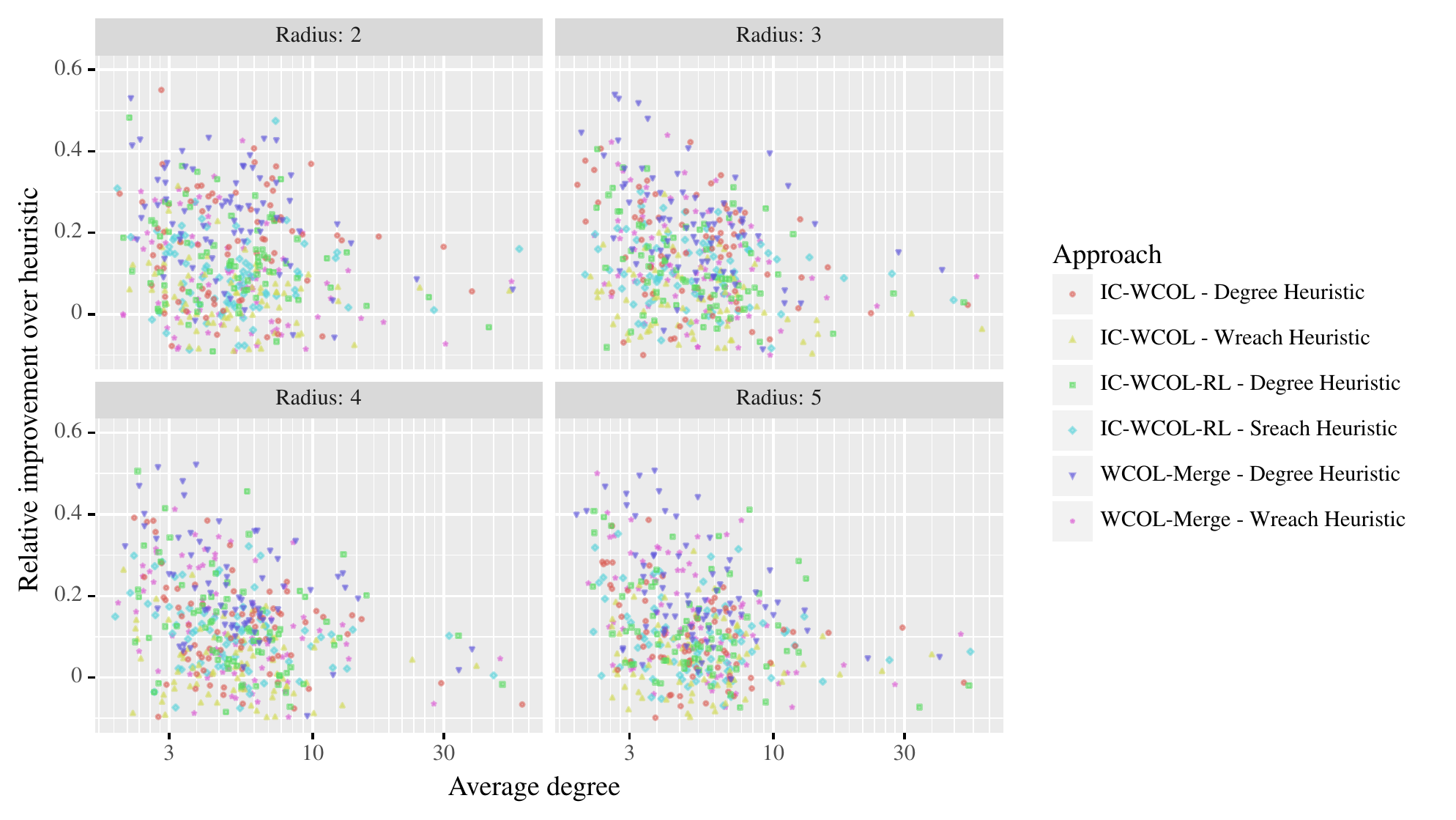}
  \caption{Relative improvement achieved by the turbocharging algorithms when compared to the underlying heuristics subject to the average degree of the input graph broken down by radius.}
  \label{fig:avgdegrelimp}
\end{figure}
\begin{figure}[!t]
  \centering
  \includegraphics[width=\textwidth]{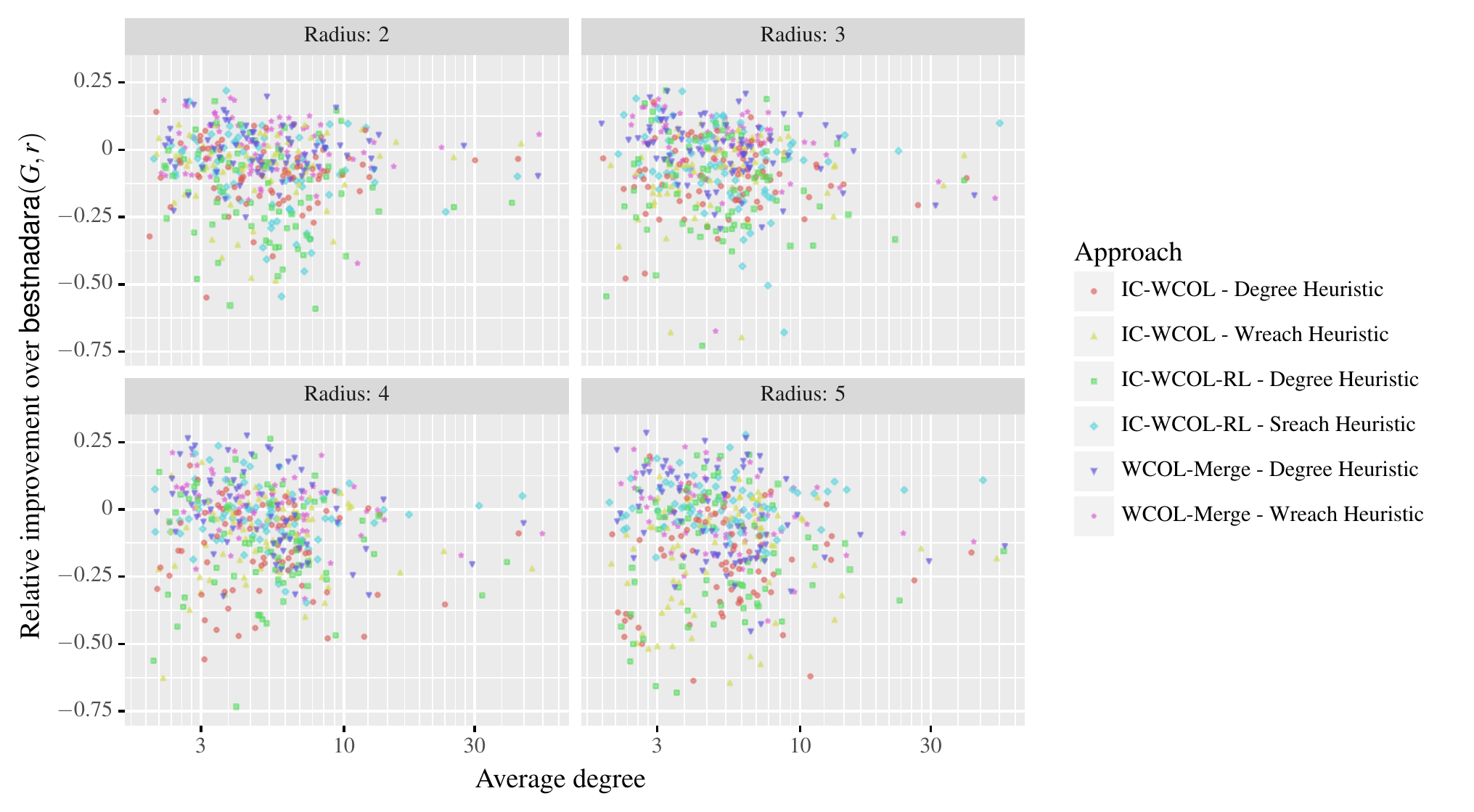}
  \caption{Relative improvement achieved by the turbocharging algorithms when compared to the best algorithm of Nadara et al.\ subject to the average degree of the input graph broken down by radius.}
  \label{fig:avgdegimpnad}
\end{figure}
\subsection{Comparing with a lower bound}\label{sec:lower-bound}
\newcommand{\diam}{\ensuremath{\mathrm{diameter}}}
\newcommand{\degree}{\ensuremath{\mathrm{degree}}}
\looseness=-1
A trivial lower bound for the weak $r$-coloring number of a graph is its degeneracy.
So far, no nontrivial lower bounds for the weak $r$-coloring numbers of Nadara et al.'s dataset were known.
In this section, we propose a heuristic that computes lower bounds based on contracting subgraphs of bounded diameter and compare its results to the best upper bounds.
Our original goal was to use it to speed up the turbocharging algorithms, but this was unsuccessful so far.

Our heuristic is motivated by the MMD+ lower-bounding algorithm for treewidth by Bodlaender and Koster~\cite{DBLP:journals/iandc/BodlaenderK11} that computes the degeneracy of minors of a graph~$G$.
In our case, we cannot contract subgraphs of arbitrary size into single vertices, but we can contract subgraphs of diameter $\lfloor \frac{r-1}{2}\rfloor$, as shown below.
Let us introduce some notation.
A graph $H$ is a \emph{minor} of a graph $G$, if there are pairwise vertex-disjoint connected subgraphs $H_1,\dots ,H_n$ of $G$ such that whenever $\{v_i,v_j\}\in E(H)$, there are $u_i\in V(H_i)$ and $u_j\in V(H_j)$ with $\{u_i,u_j\}\in E(G)$.
We then call $(H_1,\dots ,H_n)$ a \emph{minor model} of $H$ in $G$, and let $\phi(v_i)=H_i$.
Our lower bound is based on the following lemma.
\begin{lemma}%
  \label{lemma:diameterrlowerbound}
  Let $G$ be a graph and let $H$ be a minor of $G$ such that for its minor model $(H_1,\dots ,H_n)$ each $H_i,i\in\{1,\dots, n\}$, has diameter at most $\lfloor\frac{r-1}{2}\rfloor$; then $\degen(H)+1\le \wcol_r(G)$.
\end{lemma}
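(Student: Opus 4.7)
The plan is to turn an optimal weak $r$-coloring ordering of $G$ into a degeneracy ordering of $H$ of comparable quality, by ``projecting'' vertices along the minor model.

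First, let $L$ be a vertex ordering of $G$ realizing $\wcol_r(G,L) = \wcol_r(G)$. For each branch set $H_i$ let $v_i^\star \in V(H_i)$ be the vertex that is leftmost in $L$. Define an ordering $L'$ of $V(H) = \{v_1,\dots,v_n\}$ by setting $v_i \prec_{L'} v_j$ iff $v_i^\star \prec_L v_j^\star$ (ties broken arbitrarily). Since $\wcol_1(H) = \degen(H)+1$ (cf.\ the discussion in the introduction), it suffices to show that $\wcol_1(H, L') \le \wcol_r(G,L)$.

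Next I would fix a vertex $v_i \in V(H)$ and exhibit an injection
\[
\psi\colon \wreach_1(H, L', v_i) \hookrightarrow \wreach_r(G, L, v_i^\star), \qquad v_j \mapsto v_j^\star.
\]
Injectivity is immediate because the branch sets $H_1,\dots,H_n$ are pairwise disjoint. For well-definedness, note that if $v_j \in \wreach_1(H,L',v_i)$, then either $v_j = v_i$ (so $\psi(v_j) = v_i^\star$ is trivially in $\wreach_r(G,L,v_i^\star)$) or $v_j \sim_H v_i$ with $v_j^\star \preceq_L v_i^\star$. In the latter case, by the definition of a minor model there exist $u \in V(H_i)$ and $u' \in V(H_j)$ with $\{u,u'\} \in E(G)$. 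Concatenate a shortest $v_j^\star$--$u'$ path in $H_j$, the edge $\{u',u\}$, and a shortest $u$--$v_i^\star$ path in $H_i$; using the diameter bound on both branch sets, the resulting walk has length at most $2\lfloor (r-1)/2\rfloor + 1 \le r$. Every vertex on the $H_j$-portion is in $V(H_j)$ and hence $\succeq_L v_j^\star$ by the choice of $v_j^\star$; every vertex on the $H_i$-portion is in $V(H_i)$ and hence $\succeq_L v_i^\star \succeq_L v_j^\star$. Thus $v_j^\star \in \wreach_r(G,L,v_i^\star)$, as needed.

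Combining these steps gives $|\wreach_1(H,L',v_i)| \le |\wreach_r(G,L,v_i^\star)| \le \wcol_r(G,L)$ for every $v_i$, so $\degen(H)+1 = \wcol_1(H) \le \wcol_1(H,L') \le \wcol_r(G,L) = \wcol_r(G)$. The main technical point is the length budget: the diameter bound $\lfloor (r-1)/2\rfloor$ is exactly tight for the path argument above, and choosing $v_i^\star$ as the $L$-leftmost vertex of $H_i$ is what forces the two intra-bag detours to stay weakly $r$-reachable from $v_j^\star$.
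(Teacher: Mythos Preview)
Your proof is correct and follows essentially the same approach as the paper: take an optimal ordering $L$ of $G$, pick the $L$-leftmost vertex in each branch set, use these to induce an ordering of $H$, and show that each left-neighbor of $v_i$ in $H$ yields a distinct vertex in $\wreach_r(G,L,v_i^\star)$ via a path through the two branch sets of total length at most $2\lfloor (r-1)/2\rfloor + 1 \le r$. If anything, your description of the path (explicitly concatenating shortest paths inside $H_i$ and $H_j$ with the connecting edge) is a bit more precise than the paper's, which just appeals to ``the shortest path between $\alpha(v)$ and $\alpha(v')$'' without saying that it is taken within the branch sets.
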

This bound is similar to Lemma~3.1 in Ref.~\cite{sparsitylecturenotes19}, but instead of requiring $H_i$ to have radius bounded by $(r - 1)/4$ we only require diameter bounded by $\lfloor\frac{r-1}{2}\rfloor$.
We need this stronger result because our heuristic does not guarantee the corresponding bound on the radius.
\begin{proof}[Proof of \cref{lemma:diameterrlowerbound}]
  Let $L$ be an arbitrary vertex ordering of $G$ with $\wcol_r(G,L)=\wcol_r(G)$.
  We construct an ordering $L_{V(H)}$ that witnesses $\degen(H) + 1 \le \wcol_r(G,L)$.
  For each $v\in V(H)$, let $\alpha(v)\in V(\phi(v))$ be the leftmost vertex of $V(\phi(v))$ w.r.t.\ $L$.
  We define~$L_{V(H)}$ such that $v_1\prec_{L_{V(H)}}v_2$ iff.\ $\alpha(v_1)\prec_{L}\alpha(v_2)$.
  We claim that $|\{v^\prime\in N_H(v)\mid v^\prime \prec_{L_{V(H)}}v\}|+1\le |\wreach_r(G,L,\alpha(v))|$ for all $v\in V(H)$.
  To see this, let $v\in V(H)$ be arbitrary and let $v^\prime\in N_H(v)$ s.t.\ $v^\prime\prec_{L_{V(H)}}v$.
  As $v$ and $v^\prime$ are adjacent in $H$, and $\alpha(v)$ and $\alpha(v^\prime)$ have diameter at most $\lfloor\frac{r-1}{2}\rfloor$, the shortest path $P$ between $\alpha(v)$ and $\alpha(v^\prime)$ has length at most~$r$.
  As $v^\prime\prec_{L_{V(H)}}v$ and by definition of $\alpha$, vertex $\alpha(v^\prime)$ is the leftmost vertex on $P$ w.r.t.\ $L$.
  Hence, $\alpha(v^\prime)\in \wreach_r(G,L,\alpha(v))$.
  The claim follows as $\alpha(v^\prime)$ is distinct for each neighbor~$v^\prime$ of $v$ and since $\alpha(v)\in\wreach_r(G,L,\alpha(v))$ accounts for the plus~1.
  Thus, we have that
  \begin{align*}
    \wcol_r(G)= \wcol_r(G,L) & \ge \max_{v\in V(H)}|\wreach_r(G,L,\alpha(v))| \\ &\ge \max_{v\in V(H)}|\{v^\prime\in N_H(v)\mid v^\prime \prec_{L_{V(H)}}v\}|+1\\&\ge \degen(H)+1\text{.}\qedhere
  \end{align*}
\end{proof}

\begin{algorithm}[!t]
  \KwIn{A graph $G=(V,E)$}
  \KwOut{A lower bound for the weak $r$\nobreakdash-coloring number $G$}
  $\textup{\textsf{answer}}\gets 1$;\\
  $H\gets G$;\\
  \While{$V(H)\ne \emptyset$}
  {
    select~$v$ from $V(H)$ that has minimum $\degree_{H}(v)$;\\
    \textup{\textsf{answer}}$\gets\max(\textup{\textsf{answer}},\degree_{H}(v)+1)$;\\
    \uIf{$\exists w\in N_H(v):\diam(G[V(\phi(v))\cup V(\phi(w))])\le \lfloor\frac{r-1}{2}\rfloor$\label{line:mmdplusif}}{
      Contract~$v$ and $w$ in $H$, creating a new vertex~$u$;

    }
    \Else{
      $H\gets H-v$
    }}
  \Return{\textup{\textsf{answer}};}
  \caption{WCOL-MMD+}
  \label[algorithm]{alg:mmdpluswcol} %
\end{algorithm}

We call the lower bound WCOL-MMD+ and its computation is given in \cref{alg:mmdpluswcol}.
The algorithm initializes the graph $H$ to be equal to $G$.
Then it iteratively contracts a vertex $v$ that has minimum degree in $H$ with a neighbor~$w$ into a new vertex $u$ if the model of~$u$ has diameter at most $\lfloor\frac{r-1}{2}\rfloor$.
If no such neighbor $w$ exists, it deletes~$v$.
It returns the largest $\degree_H(v)$ of a vertex $v$ as above that it encounters during the execution.
As $\degree_H(v)\le \degen(H)$ for a vertex $v$ that has minimum degree in $H$, this is indeed a lower bound.
We select that neighbor~$w$, among all choices, that has minimum degree, motivated by a strategy of Bodlaender and Koster.

\begin{table}[t!]
  \centering
  \caption{Average ratios of lower bounds to best known upper bounds.}
  \label{table:ublbcomparison}
  \begin{tabularx}{.75\linewidth}{l c >{\centering\arraybackslash}X >{\centering\arraybackslash}X}
    \toprule
    tests                 & $r$ & $\wcol_1(G)$ & WCOL-MMD+ \\\midrule
    \multirow{4}*{small}  & 2   & 0.512        & 0.512     \\
                          & 3   & 0.390        & 0.433     \\
                          & 4   & 0.328        & 0.363     \\
                          & 5   & 0.306        & 0.361     \\
    \midrule
    \multirow{4}*{medium} & 2   & 0.441        & 0.441     \\
                          & 3   & 0.292        & 0.314     \\
                          & 4   & 0.244        & 0.260     \\
                          & 5   & 0.212        & 0.261     \\
    \midrule
    \multirow{4}*{big}    & 2   & 0.375        & 0.375     \\
                          & 3   & 0.201        & 0.209     \\
                          & 4   & 0.144        & 0.148     \\
                          & 5   & 0.119        & 0.144     \\
    \bottomrule
  \end{tabularx}
\end{table}

\cref{table:ublbcomparison} shows the average ratios between the best knows upper bounds and the two lower bounds $\wcol_1(G)$ and WCOL-MMD+.
The values between both columns do not differ substantially and values are far from 1.
Still, we obtain a modest improvement over the trivial lower bound that is $\wcol_1(G)$.

\end{document}